\newcommand{\m}[1]{\boldsymbol{#1}} 
\DeclareMathOperator*{\argmin}{arg\,min}
\newcommand{\mat}[1]{\begin{bmatrix*}[r]#1
\end{bmatrix*}}
\newcommand{\matc}[1]{\begin{bmatrix*}[c]#1
\end{bmatrix*}}
\newcommand{\matl}[1]{\begin{bmatrix*}[l]#1
\end{bmatrix*}}
\newcommand\tdd[1]{} 
\newcommand\tdquelled[1]{} 
\newcommand{\nhphantom}[1]{\sbox0{#1}\hspace{-\the\wd0}}
\DeclareMathOperator{\rank}{rank}
\algnewcommand{\Initialize}[1]{%
	\State \textbf{initialize} {\raggedright #1}
}
\algnewcommand{\Statey}[1]{\Statex \hskip\ALG@thistlm #1}
\newcommand{\norm}[1]{\left\lVert #1 \right\rVert}
\newcommand{\abs}[1]{\left\lvert #1 \right\rvert}
\newdefinition{prob}{Problem}
\newdefinition{assumption}{Assumption}
\newdefinition{definition}{Definition}
\newdefinition{note}{Note}
\newdefinition{remark}{Remark}
\newtheorem{theorem}{Theorem}
\newtheorem{lemma}{Lemma}
\newtheorem{corollary}{Corollary}
\journal{Neurocomputing}
\begin{document}

\begin{frontmatter}



\title{Adaptive Optimal Control for Reference Tracking Independent of Exo-System Dynamics}


\author[1]{Florian K\"opf\corref{cor1}} 
\ead{florian.koepf@kit.edu}
\author[2]{Johannes Westermann} 
\author[1]{Michael Flad}
\author[1]{S\"oren Hohmann}
\cortext[cor1]{Corresponding author}

\address[1]{Institute of Control Systems, Karlsruhe Institute of Technology (KIT), 76131~Karlsruhe, Germany}
\address[2]{Intelligent Sensor-Actuator-Systems Laboratory, Institute for Anthropomatics and Robotics, Karlsruhe Institute of Technology, Karlsruhe, Germany}

\begin{abstract}
Model-free control based on the idea of Reinforcement Learning is a promising approach that has recently gained extensive attention. However, Reinforcement-Learning-based control methods solely focus on the regulation problem or learn to track a reference that is generated by a time-invariant exo-system. In the latter case, controllers are only able to track the time-invariant reference dynamics which they have been trained on and need to be re-trained each time the reference dynamics change. Consequently, these methods fail in a number of applications which obviously rely on a trajectory not being generated by an exo-system. One prominent example is autonomous driving. This paper provides for the first time an adaptive optimal control method capable to track reference trajectories not being generated by a time-invariant exo-system. The main innovation is a novel Q-function that directly incorporates a given reference trajectory on a moving horizon. This new Q-function exhibits a particular structure which allows the design of an efficient, iterative, provably convergent Reinforcement Learning algorithm that enables optimal tracking. Two real-world examples demonstrate the effectiveness of our new method. 
\end{abstract}



\begin{keyword}
Adaptive Dynamic Programming \sep Optimal Tracking \sep Reinforcement Learning \sep Learning Systems \sep Machine Learning \sep Artificial Intelligence \sep Optimal Control


\end{keyword}

\end{frontmatter}


\section{Introduction}
Reinforcement Learning (RL) has recently gained extensive attention as a model-free adaptive optimal control method which learns optimal behavior from interaction with the environment and observations of resulting states and rewards (see e.g. \cite{Sutton.2018, Wang.2017, Luo.2017} and references therein).

The existing control-oriented approaches can be classified in two groups: Some methods focus on the regulation problem, i.e. steering the system towards an equilibrium point (typically~$\m{0}$), e.g. \cite{Kamalapurkar.2016, Luo.2017, Vamvoudakis.2017, Wang.2016, Wei.2017}. The second group is devoted to a specific tracking case, where the reference $\m{r}_k$ to be tracked is generated by an exo-system, i.e. it is assumed that there exists $\m{r}_{k+1}=\m{f}_{\text{ref}}(\m{r}_k)$.

It is noteworthy that a controller that has learned to solve the regulation problem is not directly applicable to the tracking case as the reference trajectory influences the associated rewards that the RL agent is facing and therefore affects the value function which has to be learned.

In order to extend learning-based controllers to the tracking case, various RL-based tracking controllers belonging to the second group of approaches have been proposed recently.
For unknown internal system dynamics but a known control coefficient matrix, adaptive tracking controllers approximating the value function are proposed for discrete-time \cite{Dierks.2009} and continuous-time \cite{Modares.2014, Zhang.2017} systems. Notable critic-only Q-learning methods for completely unknown dynamics are proposed in \cite{Luo.2016} and \cite{Kiumarsi.2014}, where \cite{Kiumarsi.2014} focuses on the linear-quadratic setting and \cite{Luo.2016} allows nonlinear system dynamics. Other works combine system identification and adaptive schemes including past information \cite{Lv.2017}, use filter-based goal representation Heuristic Dynamic Programming for tracking \cite{Mu.2017b}, focus on learning a tracking controller from input-output data rather than assuming full state information \cite{Kiumarsi.2015b}, consider the nonzero-sum game case \cite{Kopf.2018}, utilize the exo-system of a prescribed robot impedance model in order to learn a model-following behavior for assistive human-robot interaction \cite{Modares.2016}, focus on systems with matched uncertainties \cite{Mu.2017} or consider tracking on an infinite horizon with unbounded cost~\cite{Bernhard.2018}.

All of these methods \cite{Dierks.2009, Kiumarsi.2014, Kiumarsi.2015b, Kopf.2018, Luo.2016, Lv.2017, Modares.2014, Modares.2016, Mu.2017, Mu.2017b, Zhang.2017,Bernhard.2018} consider the case where the reference trajectory is generated by a time-invariant exo-system $\m{f}_{\text{ref}}$, i.e. rely on the assumption that the reference $\m{r}_k$ follows $\m{r}_{k+1}=\m{f}_{\text{ref}}(\m{r}_k)$ (respectively $\dot{\m{r}}(t)=\m{f}_{\text{ref}}(\m{r}(t))$ in the continuous-time case).

However, for various applications, this assumption does not hold (e.g. vehicles that follow a road, robots in sophisticated human-machine collaboration or specific time-varying sequences in process engineering). Consequently, present exo-system-based methods are not suited for these types of applications. This is because the learned parameters (i.e. Q-function or value function approximations and associated control laws) are corresponding to the system dynamics and reference dynamics $\m{f}_{\text{ref}}$ and thus need to be re-learned as soon as any other reference $\tilde{\m{f}}_{\text{ref}}\neq\m{f}_{\text{ref}}$ should be followed.
A method that tries to cope with this challenge is the multiple-model approach in \cite{Kiumarsi.2015}, where an adaptive self-organizing map detects changes and switches between various learned models. However, in this approach new sub-models need to be trained whenever the reference dynamics $\m{f}_{\text{ref}}$ changes. 

%

An RL method that tracks reference trajectories not necessarily resulting from a time-invariant exo-system $\m{f}_{\text{ref}}$ (or multiple switched models \cite{Kiumarsi.2015}) has not yet been proposed. Therefore, in contrast to \cite{Dierks.2009, Kiumarsi.2014, Kiumarsi.2015b, Kopf.2018, Luo.2016, Lv.2017, Modares.2014, Modares.2016, Mu.2017, Mu.2017b, Zhang.2017,Bernhard.2018,Kiumarsi.2015}, our work provides for the first time an RL method allowing to track an arbitrary reference trajectory on a moving horizon.

Our general idea is to explicitly incorporate the reference trajectory on a moving horizon in our new Q-function that is learned without requiring the system dynamics. Due to considering not only the current tracking error but rather the reference trajectory, this Q-function yields a controller which does not only achieve reactive but predictive behavior.




\tdd{off-policy learning!? Noch erwähnen, dass Verfahren ähnlich wie tabular Q-learning ein off-policy-Verfahren ist}
\tdd{allg. zu Iteration?}

In particular, we provide:
\begin{itemize}
	\item A novel moving horizon tracking Q-function whose minimizing control also minimizes the tracking costs.
	\item Derivation and proof of the analytical solution of the novel Q-function given the cost function and the system dynamics for the linear-quadratic tracking case.
	\item Convergence proofs for our learning algorithm that learns optimal tracking from data without knowledge of the system dynamics. Thus, the estimated Q-function parameters as well as the associated control law converge to the optimal solution.

\end{itemize}

The rest of this paper is organized as follows. In Section~\ref{problem_def}, we define the optimal tracking problem with unknown system dynamics. The novel Q-function for arbitrary reference tracking is defined and analyzed in Section~\ref{qfunc_section}. In Section~\ref{section_learning}, we introduce our learning algorithm that is based on the previously defined Q-function and provide convergence proofs. Simulation results and a comparison of our method with an adaptive tracking controller that is trained on a reference generated by an exo-system $\m{f}_{\text{ref}}$ are presented and discussed in Section~\ref{sectionsimulation} before the paper is concluded in Section~\ref{conclusion}.

\section{Problem Definition}\label{problem_def}
\tdd{Vergleichbar mit Masterarbeit. ABER: hier bitte mit S = Q, damit letzter Term nicht rausfällt und auch wirklich Nh auftaucht (dadurch verschiebt sich vermutlich einfach nur der Index? Wird damit auch Nh = 1 zulässig?)-> ja!} 

Consider a discrete-time linear system
\begin{align}
\m{x}_{k+1} = \m{A}\m{x}_{k} + \m{B}\m{u}_{k}, \label{eq:System}
\end{align}  
where $k \in \mathbb{N}_0$ is the discrete time step, $\bm{x}_{k} \in \mathbb{R}^{n}$ the state vector and $\bm{u}_k \in \mathbb{R}^{m}$ the input vector. Both the system matrix $\m{A}\in\mathbb{R}^{n\times n}$ and the input matrix $\m{B}\in\mathbb{R}^{n\times m}$ are unknown, albeit the system $(\m{A}, \m{B})$ is assumed to be controllable.

Assume that at each time step k, an arbitrary reference $\tilde{\m{r}}_i^{(k)}\in\mathbb{R}^n$ is given on a moving horizon of length $N$, where $i$ is the time index starting at $k$. Beyond the horizon $N$, let the reference be $\m{0}$. Thus,
\begin{align}\label{eq:ref}
\m{r}_i^{(k)}=\begin{cases}\tilde{\m{r}}_i^{(k)}, &\text{for } i=k,\dots,k+N \\ \m{0}, &\text{for } i=k+N+1,\dots,\infty\end{cases}
\end{align}
follows. 

Our aim is to learn to track $\m{r}_i^{(k)}$ \eqref{eq:ref} optimally w.r.t. the quadratic cost
\begin{align}
J_k &= \sum \limits_{i=k}^{\infty}\gamma^{i-k}\underbrace{\frac{1}{2}\left(\m{e}_i^\intercal\m{Q}\m{e}_i + \m{u}_i^\intercal\m{R}\m{u}_i\right)}_{c\left(\m{x}_i,\m{u}_i, \m{r}_i\right)=c_i} \label{eq:Kostena}
\end{align}
that has to be minimized. Note that we write $\m{r}_i=\m{r}_i^{(k)}$ for brevity of notation in the following as $J_k$ indicates that the reference $\m{r}_i^{(k)}$ starting at $k$ is used according to \eqref{eq:ref}. Here, $\m{e}_i=\m{x}_i-\m{r}_i$ is the deviation of the system state $\m{x}_i$ from the reference $\m{r}_i$ at time step $i$. $\m{Q}\in\mathbb{R}^{n\times n}$, $\m{Q}=\m{Q}^\intercal\succeq\m{0}$ is a symmetric, positive semidefinite matrix penalizing deviations of the state $\m{x}_i$ from the reference $\m{r}_i$, $\m{R}\in\mathbb{R}^{m\times m}$ and $\m{R}=\m{R}^\intercal\succ\m{0}$ is a symmetric, positive definite matrix penalizing the control effort. Furthermore, {${\gamma \in\left[0,1\right)}$} a discount factor and $c_i \in \mathbb{R}$ denotes the one-step cost.

The choice of our cost function \eqref{eq:Kostena} is motivated as follows: On one hand, the reference to be tracked is usually known only on a finite horizon $N$ (e.g. a road course). On the other hand, the Q-learning method that we use in Section~\ref{section_learning} relies on an infinite horizon cost function which allows the efficient formulation of a Bellman-like equation. Thus, this cost function \eqref{eq:Kostena} accounts for a given reference of finite horizon $N$ and enables stability and convergence of the associated RL algorithm due to the infinite horizon.

Therefore, our problem definition can be summarized in Problem~\ref{problem1}.
\begin{prob}\label{problem1}
	At each time step $k$, find the optimal control input $\m{u}_k^*$ and apply it to system \eqref{eq:System}, where $\m{u}_k^*, \m{u}_{k+1}^*, \dots$ is the control sequence minimizing the discounted cost \eqref{eq:Kostena} subject to the system dynamics \eqref{eq:System}, where $\m{A}$ and $\m{B}$ are unknown, given $\m{x}_k$ and $\m{r}_k, \m{r}_{k+1}, \dots, \m{r}_{k+N}$.
\end{prob}


\section{Extended Q-Function for Reference Tracking}\label{qfunc_section}
Our idea is to define a new reference-dependent Q-function in contrast to the commonly used Q-function (i.e. we define a state-action-\textit{reference} function rather than a state-action function). This Q-function is constructed such that its minimizing control input constitutes a solution to Problem~\ref{problem1}.

In this section, we introduce the reference-dependent Q-function and derive its analytical solution. This provides important insights in how to parametrize the Q-function in Section~\ref{section_learning}, which is the basic ingredient for a convergent reinforcement learning algorithm capable of learning the optimal tracking solution.


We begin with the general case of a \textit{finite} optimization horizon $K$, later we let $K\rightarrow \infty$. The notation seems to be a little bit clumsy but is of high importance, since optimization horizon $K$ and moving horizon of the reference $N$ should not be confused. Furthermore, the notation which control input is plugged into the one-step cost and Q-functions will be useful and we need to distinguish between the actual time step $k$ which the system is in and the time step $\kappa$ on the optimization horizon $K$.

\begin{definition}[Reference-dependent Q-function]\label{def:Q}
	Our proposed reference-dependent Q-function is defined as
	\begin{align}
	\tensor*[^K]{\!Q}{_{\kappa}}&=c_{k+\kappa}+\gamma\sum_{i=k+\kappa+1}^{k+K}\gamma^{i-(k+\kappa+1)}\left.c_i\right|_{\m{u}_i^*} \nonumber \\ &= c_{k+\kappa}+\gamma \left.\tensor*[^K]{\!Q}{_{\kappa+1}}\right|_{\m{u}^*_{k+\kappa+1}}\label{Q_def1}
	\end{align}
	where
	\begin{align}
	c_i&=c\left(\m{x}_{i}, \m{u}_{i}, \m{r}_{i}\right),\label{eq:c}\\
	\tensor*[^K]{\!Q}{_{\kappa}}&=Q_{K-\kappa}\left(\m{x}_{k+\kappa}, \m{u}_{k+\kappa}, \m{r}_{k+\kappa}, \dots, \m{r}_{k+K}\right),\label{eq:Q}\\
	\tensor*[^K]{\!Q}{_{K}}&=c\left(\m{x}_{k+K},\m{u}_{k+K},\m{r}_{k+K}\right)\label{eq:QNN}.
	\end{align}
	Here, $\kappa\in\mathbb{N}_0$, $\kappa < K$ denotes the time step on the current optimization horizon of length $K$ starting at $k$ and $\m{r}_i=\m{0}$, $\forall i>k+N$ (see \eqref{eq:ref}). The notation $\left.c_i\right|_{\m{u}_i^*}$ indicates that the optimal control $\m{u}_i^*$ is applied in \eqref{eq:c} and $\left.\tensor*[^K]{\!Q}{_{\kappa+1}}\right|_{\m{u}^*_{k+\kappa+1}}$ denotes that $\m{u}_{k+\kappa+1}=\m{u}^*_{k+\kappa+1}$ in $\tensor*[^K]{\!Q}{_{\kappa+1}}$ (cf.~\eqref{eq:Q}).
\end{definition}
Therefore, $\tensor*[^K]{\!Q}{_{\kappa}}$ is the accumulated discounted cost from time step $k+\kappa$ to $k+K$ if the control $\m{u}_{k+\kappa}$ is applied at time step $k+\kappa$ and the optimal controls $\m{u}^*_{k+\kappa+1}, \dots, \m{u}^*_{k+K}$ minimizing the cost-to-go are applied thereafter while the reference is provided on a moving horizon of length $N$. With the finite horizon cost function defined as
\begin{align}\label{eq:JkK}
\tensor*[^K]{\!J}{_k} = \sum_{i=k}^{k+K}\gamma^{i-k}\frac{1}{2}\left(\m{e}_i^\intercal\m{Q}\m{e}_i + \m{u}_i^\intercal\m{R}\m{u}_i\right)
\end{align}
the subsequent Lemma~\ref{lemmauQisuJ} follows.
\begin{lemma}\label{lemmauQisuJ}
	The control $\m{u}_k$ minimizing the reference-dependent Q-function $\tensor*[^K]{\!Q}{_{0}}$ is a solution for $\m{u}_k^*$ minimizing $\tensor*[^K]{\!J}{_k}$.
\end{lemma}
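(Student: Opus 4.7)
The plan is to show the equivalence via Bellman's principle of optimality, by rewriting $\tensor*[^K]{\!Q}{_{0}}$ so that its dependence on $\m{u}_k$ mirrors the nested optimization of $\tensor*[^K]{\!J}{_k}$.

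First I would unroll the definition \eqref{Q_def1} at $\kappa=0$ to obtain
\begin{align*}
\tensor*[^K]{\!Q}{_{0}} \;=\; c_k \;+\; \gamma\!\!\sum_{i=k+1}^{k+K}\!\gamma^{i-k-1}\,\left.c_i\right|_{\m{u}_i^*},
\end{align*}
and identify the tail sum as the optimal cost-to-go from time $k+1$. That is, defining
\begin{align*}
V^*(\m{x}_{k+1}) \;\coloneqq\; \min_{\m{u}_{k+1},\dots,\m{u}_{k+K}}\;\sum_{i=k+1}^{k+K}\gamma^{i-k-1}\,c_i,
\end{align*}
where the reference entries are held fixed on the moving horizon, the tail sum equals $V^*(\m{x}_{k+1})$ precisely because $\m{u}_{k+1}^*,\dots,\m{u}_{k+K}^*$ are, by the annotation in Definition~\ref{def:Q}, the controls minimizing the cost-to-go from $k+1$ onwards. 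Note that $\m{x}_{k+1}=\m{A}\m{x}_k+\m{B}\m{u}_k$ is itself a function of $\m{u}_k$ through \eqref{eq:System}, which is exactly how the Q-function inherits dependence on $\m{u}_k$ beyond the immediate one-step cost.

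Next I would write $\tensor*[^K]{\!J}{_k}$ in nested-minimization form. Splitting off the $i=k$ term and minimizing sequentially yields
\begin{align*}
\min_{\m{u}_k,\dots,\m{u}_{k+K}} \tensor*[^K]{\!J}{_k}
\;=\; \min_{\m{u}_k}\!\left[\,c_k + \gamma\, V^*(\m{x}_{k+1})\,\right]
\;=\; \min_{\m{u}_k}\tensor*[^K]{\!Q}{_{0}},
\end{align*}
which is just the principle of optimality applied to the finite-horizon sum in \eqref{eq:JkK}. Comparing the two sides, any $\m{u}_k$ attaining the right-hand minimum also attains the left-hand one, so the argmin of $\tensor*[^K]{\!Q}{_{0}}$ over $\m{u}_k$ is a valid choice for $\m{u}_k^*$ in the minimization of $\tensor*[^K]{\!J}{_k}$.

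The only delicate step is making precise that the ``$\m{u}_i^*$'' appearing in Definition~\ref{def:Q} are indeed the minimizers of the tail cost \emph{as functions of the realized state $\m{x}_{k+1}$} (and of the fixed reference window), rather than a fixed sequence independent of $\m{u}_k$; once this is fixed, the argument is a one-line invocation of dynamic programming. I would therefore state this dependence explicitly before invoking optimality, and then conclude that $\argmin_{\m{u}_k}\tensor*[^K]{\!Q}{_{0}}\subseteq\argmin_{\m{u}_k}\tensor*[^K]{\!J}{_k}$, which is the claim.
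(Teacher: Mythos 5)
Your proposal is correct and follows essentially the same route as the paper: both arguments reduce to the observation that $\min_{\m{u}_k}\tensor*[^K]{\!Q}{_{0}}$ equals the fully optimized discounted sum $\sum_{i=k}^{k+K}\gamma^{i-k}\left.c_i\right|_{\m{u}_i^*}$, which is exactly $\min_{\m{u}_k,\dots,\m{u}_{k+K}}\tensor*[^K]{\!J}{_k}$ by the principle of optimality. Your version merely makes explicit the cost-to-go $V^*(\m{x}_{k+1})$ and the state-feedback nature of the $\m{u}_i^*$, details the paper's terser chain of equalities leaves implicit.
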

\begin{proof}
	With \eqref{eq:JkK} and
	\begin{align}\label{eq:minuQ}
	\min_{\m{u}_k}\tensor*[^K]{\!Q}{_{0}}&=\left.c_k\right|_{\m{u}^*_k}+\gamma \left.\tensor*[^K]{\!Q}{_{1}}\right|_{\m{u}^*_{k+1}}\nonumber \\ &=\left.\tensor*[^K]{\!Q}{_{0}}\right|_{\m{u}^*_k}=\sum_{i=k}^{k+K}\gamma^{i-k}\left.c_i\right|_{\m{u}^*_i}
	\end{align}
	follows
	\begin{align}
	\min_{\m{u}_{k}, \dots, \m{u}_{k+K}}\tensor*[^K]{\!J}{_k}&=\sum_{i=k}^{k+K}\gamma^{i-k}\left.c_i\right|_{\m{u}^*_i}=\left.\tensor*[^K]{\!Q}{_{0}}\right|_{\m{u}^*_k}.
	\end{align}
\end{proof}
Taking the limit
\begin{align}
\lim_{K\rightarrow\infty}\tensor*[^K]{\!J}{_k}=J_k
\end{align}
yields $J_k$ (cf. \eqref{eq:Kostena}). With the Q-function $\tensor*[^K]{\!Q}{_{0}}$ defined according to Definition~\ref{def:Q} and as a result of Lemma~\ref{lemmauQisuJ}, Problem~\ref{problem1} is equivalent to the following Problem~\ref{problem2}.
\begin{prob}\label{problem2}
	Given $\m{r}_k, \m{r}_{k+1}, \dots, \m{r}_{k+N}$, in each state $\m{x}_k$ at time step $k$, find the control $\m{u}_k^*$ minimizing the reference-dependent Q-function $Q_{0}$, where
	\begin{align}
	Q_{0} = \lim_{K\rightarrow\infty}\tensor*[^K]{\!Q}{_{0}}
	\end{align}
	and apply it to the system whose matrices $\m{A}$ and $\m{B}$ are unknown.
\end{prob}

We will proceed in two steps. First, it is assumed that the system matrices $\m{A}$ and $\m{B}$ are known. Later on, this assumption will be dropped and an iterative solution based on a temporal difference error will be introduced. We will now propose the analytical solution of $\tensor*[^K]{\!Q}{_{0}}$ in Theorem~\ref{Theo:Q}.

\begin{theorem}[Analytical solution of {$\tensor*[^K]{\!Q}{_{0}}$}]\label{Theo:Q}
	For $K\geq N$, the Q-function $\tensor*[^K]{\!Q}{_{0}}$ (cf. Definition~\ref{def:Q}) with the objective function \eqref{eq:Kostena} is given by
	\begin{align}
	\tensor*[^K]{\!Q}{_{0}} = \frac{1}{2}\mat{\m{x}_k^\intercal & \m{u}_k^\intercal & \m{r}_k^\intercal & \cdots & \m{r}_{k+N}^\intercal & \m{0}_{\vphantom{k+N}}^\intercal}\m{H}_K\matc{\m{x}_k\\\m{u}_k\\\m{r}_k\\ \vdots \\ \m{r}_{k+N}\\ \m{0}},\label{eq:Q0_mit_H}
	\end{align}
	where $\m{H}_K=\m{H}_K^\intercal\in\mathbb{R}^{\left((K+2)n+m\right)\times\left((K+2)n+m\right)}$ with
	\begingroup
	\setlength{\arraycolsep}{3pt}
	\begin{align}
	\m{H}_K = \matl{\m{h}_{xx} 		& \m{h}_{xu} 		& \m{h}_{xr_0} 		& \m{h}_{xr_1} 		& \m{h}_{xr_2} 		& \cdots & \m{h}_{xr_{K}} 	\\ \label{eq:FormDerQ-Fkt_2}
		\m{h}_{ux} 			& \m{h}_{uu} 		& \m{0} 				& \m{h}_{ur_1}		& \m{h}_{ur_2} 		& \cdots & \m{h}_{ur_{K}} 	\\
		\m{h}_{r_0x} 		& \m{0}					& \m{h}_{r_0r_0}	& \m{0} 				& \m{0} 				& \cdots & \m{0} 				\\
		\m{h}_{r_1x} 		& \m{h}_{r_1u} 		& \m{0}					& \m{h}_{r_1r_1}	& \m{0} 				& \cdots & \m{0} 				\\
		\m{h}_{r_2x} 		& \m{h}_{r_2u} 		& \m{0}					& \m{0}		 			& \m{h}_{r_2r_2}	& \cdots & \m{h}_{r_{2}r_{K}} \\ 
		\m{h}_{r_3x}		& \m{h}_{r_3u}		& \m{0}					& \m{0}					& \m{h}_{r_3r_2}	& \cdots & \m{h}_{r_3r_{K}} 	\\
		\,\vdots 				& \,\vdots 			& \,\vdots 			& \,\vdots 			& \,\vdots 			& \ddots & \,\vdots 			\\ 
		\m{h}_{r_{K}x} 	& \m{h}_{r_{K}u} 	&\m{0}					& \m{0}					& \m{h}_{r_{K}r_2}& \cdots & \m{h}_{r_{K}r_{K}} }.
	\end{align}
	\endgroup
	The exact values of $\m{H}_K$ follow from the subsequent proof.
\end{theorem}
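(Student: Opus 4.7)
The plan is to prove the claim by backward induction on $\kappa$, showing for every $0\le\kappa\le K$ that $\tensor*[^K]{\!Q}{_\kappa}$ is a quadratic form in the extended vector $\m{z}_\kappa=[\m{x}_{k+\kappa}^\intercal,\m{u}_{k+\kappa}^\intercal,\m{r}_{k+\kappa}^\intercal,\m{r}_{k+\kappa+1}^\intercal,\dots,\m{r}_{k+K}^\intercal]^\intercal$ whose Gram matrix $\m{H}^{(\kappa)}$ carries the structural invariant that (i) the block associated with the ``current'' reference $\m{r}_{k+\kappa}$ couples only with $\m{x}_{k+\kappa}$ and with itself, and (ii) the block associated with $\m{r}_{k+\kappa+1}$ decouples from the tail $\m{r}_{k+\kappa+2},\dots,\m{r}_{k+K}$. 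Specialising to $\kappa=0$ this gives exactly the block pattern of \eqref{eq:FormDerQ-Fkt_2}, and since $K\ge N$ the tail references $\m{r}_{k+N+1}=\cdots=\m{r}_{k+K}=\m{0}$ may be collapsed into the single $\m{0}$ that appears in~\eqref{eq:Q0_mit_H}. The base case $\kappa=K$ is immediate from~\eqref{eq:QNN}: the one-step cost $c_{k+K}=\tfrac12(\m{x}_{k+K}-\m{r}_{k+K})^\intercal \m{Q}(\m{x}_{k+K}-\m{r}_{k+K})+\tfrac12\m{u}_{k+K}^\intercal\m{R}\m{u}_{k+K}$ has no $\m{u}$--$\m{r}$ cross-term and involves only the current reference, so (i) holds and (ii) is vacuous.

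For the inductive step I assume the invariant at index $\kappa+1$. Because the $\m{u}\m{u}$-block of $\m{H}^{(\kappa+1)}$ inherits $\m{R}\succ\m{0}$ plus a positive semidefinite contribution accumulated from the tail, the first-order condition $\partial\tensor*[^K]{\!Q}{_{\kappa+1}}/\partial\m{u}_{k+\kappa+1}=\m{0}$ determines a unique linear feedback $\m{u}^*_{k+\kappa+1}$. The crucial observation is that $\m{u}^*_{k+\kappa+1}$ is affine in $\m{x}_{k+\kappa+1}$ and in $\m{r}_{k+\kappa+2},\dots,\m{r}_{k+K}$ but \emph{independent} of $\m{r}_{k+\kappa+1}$, because the zero block between $\m{u}_{k+\kappa+1}$ and $\m{r}_{k+\kappa+1}$ inside $\m{H}^{(\kappa+1)}$---that is, invariant (i) at index $\kappa+1$---removes that term. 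Substituting this feedback back gives $\tensor*[^K]{\!Q}{_{\kappa+1}}|_{\m{u}^*}$ as a quadratic form in $\m{x}_{k+\kappa+1}$ and the remaining references in which $\m{r}_{k+\kappa+1}$ still couples only with $\m{x}_{k+\kappa+1}$ and with itself, precisely because $\m{u}^*$ contributes nothing involving $\m{r}_{k+\kappa+1}$.

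To pass from $\kappa+1$ to $\kappa$ I apply the Bellman-like recursion of Definition~\ref{def:Q} and substitute the dynamics $\m{x}_{k+\kappa+1}=\m{A}\m{x}_{k+\kappa}+\m{B}\m{u}_{k+\kappa}$ into $\gamma\tensor*[^K]{\!Q}{_{\kappa+1}}|_{\m{u}^*}$, then add $c_{k+\kappa}$. The substitution sends every $\m{x}_{k+\kappa+1}$-coupling to a pair of couplings with $\m{x}_{k+\kappa}$ and $\m{u}_{k+\kappa}$ while leaving all sparsity among the reference blocks untouched, and $c_{k+\kappa}$ introduces the freshly appearing $\m{r}_{k+\kappa}$ only through a $\tfrac12\m{Q}$ self-block and a $-\m{Q}$ cross-block with $\m{x}_{k+\kappa}$---no coupling to $\m{u}_{k+\kappa}$ and no coupling to any later reference. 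This re-establishes invariants (i) and (ii) at index $\kappa$; iterating down to $\kappa=0$ produces \eqref{eq:FormDerQ-Fkt_2}, and collecting the coefficients accumulated along the recursion supplies the explicit entries $\m{h}_{\cdot\cdot}$ advertised in the statement. The principal difficulty is combinatorial rather than analytical: one must track carefully which reference blocks acquire which cross-couplings under the two operations (minimisation over $\m{u}$ and substitution of the dynamics), the heart of the argument being the single observation that decoupling $\m{u}$ from the current reference makes the optimal action reference-free in that coordinate and thereby lets that coordinate stay decoupled from the rest after one more step of the recursion.
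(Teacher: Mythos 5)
Your proposal is correct and follows the same overall strategy as the paper's proof in \ref{append:notations}: backward induction on $\kappa$ through the Bellman-like recursion \eqref{Q_def1}, minimizing over $\m{u}_{k+\kappa+1}$ (the first-order condition being valid since the $\m{u}\m{u}$-block inherits $\m{R}\succ\m{0}$), and then substituting the dynamics \eqref{eq:System}. The difference lies in what is carried through the induction. The paper tracks explicit linear factors $\m{\rho}_0^\kappa$, $\m{\rho}_1^\kappa$, $\m{\chi}_i^\kappa$, $\m{\mu}_i^\kappa$ built from the recursively defined matrices $\m{X}_\zeta^p$, $\m{U}_\zeta^p$, and writes $\tensor*[^K]{\!Q}{_{\kappa}}$ as the explicit sum of quadratic forms \eqref{eq:Q_kappa}; the block pattern \eqref{eq:FormDerQ-Fkt_2} is then read off from the fact that $\m{\rho}_0^\kappa$ involves only $(\m{x}_{k+\kappa},\m{r}_{k+\kappa})$, $\m{\rho}_1^\kappa$ only $(\m{x}_{k+\kappa},\m{u}_{k+\kappa},\m{r}_{k+\kappa+1})$, and $\m{\chi}_i^\kappa,\m{\mu}_i^\kappa$ only the tail from $\m{r}_{k+\kappa+2}$ onward. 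You instead propagate a two-part sparsity invariant on the Gram matrix itself, with the single key observation that the vanishing $\m{u}$--$\m{r}_{k+\kappa+1}$ block makes $\m{u}^*_{k+\kappa+1}$ independent of the current reference, so that reference picks up only $\m{x}$- and $\m{u}$-couplings under the dynamics substitution and nothing else. Your version isolates \emph{why} the zeros in \eqref{eq:FormDerQ-Fkt_2} appear more transparently and with far less notation; what it does not deliver without the additional bookkeeping you defer is the set of explicit entries of $\m{H}_K$ (the analogues of $\m{M}_\zeta$, $\m{F}_\zeta$, $\m{G}_\zeta$, $\m{L}_\zeta^j$), which the theorem also promises and which the paper's heavier machinery produces as a byproduct and reuses in \eqref{eq:uopt_G} and Lemma~\ref{lemma:Qexakt}. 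As a minor point, your sign convention for the $\m{x}$--$\m{r}_{k+\kappa}$ and $\m{r}_{k+\kappa}$--$\m{r}_{k+\kappa}$ blocks contributed by $c_{k+\kappa}$ matches the matrix $\m{G}$ of Lemma~\ref{lemma:VI_matrix_iteration}, which is the consistent one.
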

\begin{proof}
	The proof is of rather technical nature and given in \ref{append:notations}.
\end{proof}

For $K\rightarrow \infty$ let $\m{H}$ be the northwestern $\left((N+2)n+m\right)\times\left((N+2)n+m\right)$-submatrix of $\m{H}_K$. Then,
\begin{align}
Q_{0} &= \lim_{K\rightarrow\infty}\tensor*[^K]{\!Q}{_{0}}=\frac{1}{2}\m{z}_k^\intercal\m{H}\m{z}_k\label{eq:H_N_steps}
\end{align}
follows, where $\m{z}_k=\mat{\m{x}_k^\intercal & \m{u}_k^\intercal & \m{r}_k^\intercal & \cdots & \m{r}_{k+N}^\intercal}^\intercal$, as ${\m{r}_i = \m{0}}$, $\forall i>k+N$.

Thus, the Q-function for the LQ tracking problem is quadratic w.r.t. the state $\m{x}_k$, the control input $\m{u}_k$ and the reference $\m{r}_k, \dots, \m{r}_{k+N}$ and furthermore completely parametrized by $\m{H}$. This obviously is a generalization of the well known structure for reference free RL-procedures \cite{Bradtke.1994, Landelius.1997}. In addition, $\m{H}$ is not only quadratic but has a specific structure, which is a new result and allows an efficient parametrization for the Q-learning based algorithm in the following.

\section{Q-Learning Based Tracking}\label{section_learning}
We use the new Q-function for reference tracking on a moving horizon where the system matrices $\m{A}$ and $\m{B}$ are unknown. Thus, our aim is to determine $\m{H}$ (cf. \eqref{eq:H_N_steps}) by observations of states and rewards. 

The optimal control $\m{u}_k^*$, which is equivalent to \eqref{eq:uopt_G} for  $\kappa=0$ and $K\rightarrow\infty$, can be expressed directly by means of $\m{H}$ and is given by Corollary~\ref{corollary:uoptH}. Here, \eqref{dQdu_d2Qdu2} ensures that $\m{u}_k^*$ in fact minimizes the Q-function $Q_{0}$.
\begin{corollary}[Optimal control $\m{u}_k^*$]\label{corollary:uoptH}
	With Lemma~\ref{lemmauQisuJ} and Theorem~\ref{Theo:Q}, the optimal control at time step $k$ is given by
	\begin{align}\label{eq:uoptH}
	\m{u}_{k}^* = -\m{h}_{uu}^{-1}\mat{\m{h}_{ux} & \m{h}_{ur_1}  & \cdots & \m{h}_{ur_{N}}} \matc{\m{x}_{k} \\ \m{r}_{k+1} \\ \vdots \\ \m{r}_{k+N}}.
	\end{align}
\end{corollary}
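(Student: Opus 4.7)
The plan is to exploit the explicit quadratic structure of $Q_{0}$ given by \eqref{eq:H_N_steps} together with the block sparsity pattern of $\m{H}$ inherited from $\m{H}_K$ via \eqref{eq:FormDerQ-Fkt_2}. Since $Q_{0}=\tfrac{1}{2}\m{z}_k^\intercal \m{H}\m{z}_k$ is a quadratic form in $\m{z}_k=[\m{x}_k^\intercal\ \m{u}_k^\intercal\ \m{r}_k^\intercal\ \cdots\ \m{r}_{k+N}^\intercal]^\intercal$ and $\m{H}$ is symmetric, the first-order optimality condition with respect to $\m{u}_k$ can be read off directly by extracting the second block row of $\m{H}\m{z}_k$.

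First, I would write out the gradient explicitly. Using the symmetry $\m{H}=\m{H}^\intercal$ and \eqref{eq:FormDerQ-Fkt_2} truncated to the northwestern submatrix of size $((N+2)n+m)\times((N+2)n+m)$, the second block row of $\m{H}$ consists of $\m{h}_{ux},\ \m{h}_{uu},\ \m{0},\ \m{h}_{ur_1},\ \ldots,\ \m{h}_{ur_N}$. Differentiating the quadratic form therefore yields
\begin{equation}
\frac{\partial Q_{0}}{\partial \m{u}_k} \;=\; \m{h}_{ux}\m{x}_k + \m{h}_{uu}\m{u}_k + \m{h}_{ur_1}\m{r}_{k+1} + \cdots + \m{h}_{ur_N}\m{r}_{k+N}. \label{eq:proofgrad}
\end{equation}
Crucially, the $\m{0}$-block in position $(u,r_0)$ is exactly what makes the current reference $\m{r}_k$ drop out, matching the form claimed in \eqref{eq:uoptH}. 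Setting \eqref{eq:proofgrad} to zero and solving yields the asserted expression for $\m{u}_k^*$, provided $\m{h}_{uu}$ is invertible.

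Next, I would verify the second-order condition. From the construction in the proof of Theorem~\ref{Theo:Q}, $\m{h}_{uu}$ arises as $\m{R}$ plus a positive semidefinite term stemming from $\m{B}^\intercal(\cdot)\m{B}$; since $\m{R}\succ\m{0}$ by assumption, $\m{h}_{uu}\succ\m{0}$, which both justifies the inverse $\m{h}_{uu}^{-1}$ and guarantees that the stationary point is the unique global minimizer of $Q_{0}$ in $\m{u}_k$. This is the condition \eqref{dQdu_d2Qdu2} referenced in the statement.

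Finally, I would close the argument by invoking Lemma~\ref{lemmauQisuJ} in its limiting form $K\to\infty$: a control that minimizes $Q_{0}$ over $\m{u}_k$ also minimizes $J_k$ over $\m{u}_k$ given the optimal future inputs, so the $\m{u}_k^*$ obtained from \eqref{eq:proofgrad} solves Problem~\ref{problem1}. The only nontrivial part of this plan is confirming the zero block at position $(u,r_0)$ and positivity of $\m{h}_{uu}$; both, however, follow directly from the block identities established in the proof of Theorem~\ref{Theo:Q}, so no new machinery is required.
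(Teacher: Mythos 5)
Your proposal is correct and follows essentially the same route as the paper: the corollary is obtained from the first-order condition $\partial Q_0/\partial\m{u}_k=\m{0}$ on the quadratic form $\tfrac{1}{2}\m{z}_k^\intercal\m{H}\m{z}_k$, with the $\m{0}$-block at position $(u,r_0)$ eliminating $\m{r}_k$ and $\m{h}_{uu}\succ\m{0}$ (from $\m{R}\succ\m{0}$) supplying the second-order condition \eqref{dQdu_d2Qdu2}. The paper merely packages the same computation as the $\kappa=0$, $K\to\infty$ specialization of \eqref{eq:uopt_G} from the induction in the proof of Theorem~\ref{Theo:Q}, so no substantive difference remains.
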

Hence, if the Q-function $Q_{0}$ is known by means of the matrix $\m{H}$, the optimal control directly results from \eqref{eq:uoptH}.
\begin{note}
	In contrast to usual controllers learned by Reinforcement Learning, our control law \eqref{eq:uoptH} explicitly depends on the reference values $\m{r}_{k+1},\dots,\m{r}_{k+N}$ and is therefore able to generalize to arbitrary reference trajectories on the horizon~$N$.
\end{note}


\subsection{Parametrization of the Reference-dependent Q-function}
In order to learn the Q-function, we parametrize $Q_{0}$ and perform a value iteration on the resulting squared Bellman-like temporal difference (TD) error in order to estimate the Q-function parameters as well as the corresponding optimal control law.
Let the estimated Q-function be parametrized by means of a sum of weighted basis functions:
\begin{align}\label{eq:Qhat}
\hat{Q}_{0}&=\m{w}^\intercal\m{\phi}\left(\m{x}_k, \m{u}_k, \m{r}_k, \dots, \m{r}_{k+N}\right)=\m{w}^\intercal\m{\phi}(\m{z}_k).
\end{align}
Here, $\m{w}\in\mathbb{R}^{{L}}$ is a weight vector and $\m{\phi}:\mathbb{R}^{(N+2)n+m}\rightarrow\mathbb{R}^{{L}}$ is a vector of basis functions. Note that, in contrast to usual Q-function approximations, $\hat{Q}_{0}$ in \eqref{eq:Qhat} explicitly incorporates the reference trajectory $\m{r}_k, \m{r}_{k+1}, \dots, \m{r}_{k+N}$.
\begin{lemma}\label{lemma:Qexakt_long}
	With 
	\begin{align}\label{eq:L}
	L=&~\frac{1}{2}\left((N+2)n+m\right)\left((N+2)n+m+1\right)\nonumber\\&~-\left(n^2(2N-1)+mn\right)
	\end{align}	
	quadratic basis functions ${\m{\phi}=\mat{\phi_1 & \cdots & \phi_L}^\intercal}$, there exists a weight vector $\m{w}=\m{w}^*$ such that $\hat{Q}_{0}=Q_{0}$.
\end{lemma}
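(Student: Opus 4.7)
The plan is to leverage the explicit quadratic form $Q_0=\tfrac{1}{2}\m{z}_k^\intercal\m{H}\m{z}_k$ established in Theorem~\ref{Theo:Q} (via the limit in \eqref{eq:H_N_steps}) and to exhibit a basis $\m{\phi}$ together with a weight vector $\m{w}^*$ that reproduces it entry for entry. Because $\m{H}=\m{H}^\intercal$ has size $d\times d$ with $d=(N+2)n+m$, every symmetric quadratic form $\tfrac{1}{2}\m{z}_k^\intercal\m{H}\m{z}_k$ is a linear combination of the monomials $\{z_{k,i}z_{k,j}:1\leq i\leq j\leq d\}$, whose cardinality is $d(d+1)/2$. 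Reducing this count by the number of structural zeros of $\m{H}$ should give exactly $L$ basis functions, after which the existence of $\m{w}^*$ follows at once.

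The next step is to count the structural zeros dictated by the sparsity pattern \eqref{eq:FormDerQ-Fkt_2}, restricted to the northwestern block $\m{H}$. The symmetric pair $\m{h}_{ur_0}=\m{h}_{r_0u}^\intercal=\m{0}$ contributes $mn$ forced-zero entries above the diagonal. Inside the $r$-$r$ sub-block, rows (and columns) $r_0$ and $r_1$ are zero off their own diagonal, while for $i\geq 2$ the rows $r_i$ are zero only in columns $r_0,r_1$. Listing the off-diagonal zero pairs with $i<j$ therefore yields $(r_0,r_1),\ldots,(r_0,r_N)$ and $(r_1,r_2),\ldots,(r_1,r_N)$, i.e.\ $N+(N-1)=2N-1$ distinct $n\times n$ zero blocks, or $(2N-1)n^2$ scalar entries. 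Subtracting $mn+(2N-1)n^2$ from $d(d+1)/2$ exactly reproduces \eqref{eq:L}.

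The construction is then direct. For each remaining free entry $H_{ij}$ with $i\leq j$, I would introduce a basis function $\phi_\ell(\m{z}_k)=z_{k,i}z_{k,j}$, giving $L$ quadratic monomials in total. Setting $w^*_\ell=\tfrac{1}{2}H_{ii}$ for diagonal indices and $w^*_\ell=H_{ij}$ for off-diagonal ones and expanding $\tfrac{1}{2}\m{z}_k^\intercal\m{H}\m{z}_k=\tfrac{1}{2}\sum_i H_{ii}z_{k,i}^2+\sum_{i<j}H_{ij}z_{k,i}z_{k,j}$ via symmetry, yields $\m{w}^{*\intercal}\m{\phi}(\m{z}_k)=Q_0$, which is the desired existence statement.

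The main obstacle is the combinatorial bookkeeping behind the identity $d(d+1)/2-mn-(2N-1)n^2=L$, in particular distinguishing the "extra-sparse" rows $r_0$ and $r_1$ from the generic rows $r_i$ with $i\geq 2$ and avoiding double counting when halving by symmetry. Once this count is pinned down, the choice of basis and weights is essentially mechanical, and the same bookkeeping will later allow a minimal parametrization to be used in the learning algorithm of Section~\ref{section_learning}.
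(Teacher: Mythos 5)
Your proposal is correct and follows essentially the same route as the paper: identify the non-redundant entries of $\m{H}$ using its symmetry and the structural zeros inherited from \eqref{eq:FormDerQ-Fkt_2}, assign one quadratic monomial per such entry, and read the weights off $\m{H}$ (your placement of the factor $\tfrac{1}{2}$ in the weight rather than in the diagonal basis function is an equivalent convention). The only difference is that you make explicit the zero-count $mn+(2N-1)n^2$ that the paper's proof leaves implicit, and that bookkeeping is accurate.
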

\begin{proof}
	Due to the symmetry and the zeros in $\m{H}_K$ (cf. \eqref{eq:FormDerQ-Fkt_2}) and therefore also in $\m{H}$, there are $L$ non-redundant elements in $\m{H}$. Define quadratic basis functions $\phi_l$, $l=1,\dots,L$ of the form
	\begin{align}
	\phi_l= \begin{cases}\left\{\m{z}_k\right\}_i\left\{\m{z}_k\right\}_j, &\text{for } i\neq j \\ \frac{1}{2}\left\{\m{z}_k\right\}_i^2, &\text{for } i=j,\end{cases}
	\end{align}
	where $i, j$ indicate the corresponding non-redundant elements of $\m{H}$, $\left\{\m{\cdot}\right\}_i$ denotes the $i$-th element of a vector and $\m{z}_k$ is defined as in \eqref{eq:H_N_steps}. 
	Thus, $Q_{0}$ in \eqref{eq:H_N_steps} is equivalent to $\hat{Q}_{0}$ in \eqref{eq:Qhat} if the weights $\left\{\m{w}\right\}_i$, $i=1, \dots, L$, are equal to the corresponding non-redundant elements in $\m{H}$ which we denote by $\m{w}=\m{w}^*$.
\end{proof}

Although $L$ in Lemma~\ref{lemma:Qexakt_long} gives the maximum number of weights needed in order to parametrize $Q_{0}$ exactly, in the frequently occurring case of a sparse weighting matrix $\m{Q}$ in the cost functional \eqref{eq:Kostena}, the exact knowledge of the structure of $\m{H}$ can be exploited further in order to drastically reduce the weights $\m{w}$ that are necessary.
\begin{lemma}\label{lemma:Qexakt}
	If the $l$-th row and $l$-th column of $\m{Q}$ equals zero, then the
	\begin{itemize}
		\item $l$-th column of $\m{h}_{xr_i} \forall i \in \left\{0,\dots,N\right\}$,
		\item $l$-th column of $\m{h}_{ur_i} \forall i \in \left\{0,\dots,N\right\}$ and
		\item $l$-th row and $l$-th column of $\m{h}_{r_ir_j} \forall i,j \in \left\{0,\dots,N\right\}$
	\end{itemize}
	are all equal to zero. Thus, the number of non-redundant weights $L$ in $\m{H}$ (corresponding to the weight vector $\m{w}$) reduces to
	\begin{align}\label{eq:L_sparse}
	L &= (n-q)(n-q+1)\left(\frac{N}{2}+1\right)+\frac{1}{2}n(n+1)\\ \nonumber&+\left(m+N(n-q)\right)(m+n)+\frac{1}{2}(N-2)(N-1)(n-q)^2,
	\end{align}
	where $\m{h}_{xr_0}=\m{Q}$ and $\m{h}_{r_0r_0}=-\m{Q}$ has been considered and $q$ denotes the number of rows and columns of $\m{Q}$ that are both zero.
\end{lemma}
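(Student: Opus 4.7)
The plan is to argue in two stages. First I would establish the claimed zero patterns inside the sub-blocks of $\m{H}$ by a structural, coordinate-wise argument; then I would sum up the remaining non-redundant entries block by block to obtain the closed-form count \eqref{eq:L_sparse}.

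For the structural stage, the key observation is that a coordinate of the reference that is absent from the running cost cannot appear in the Q-function either. Writing out $c_i = \frac{1}{2}(\m{x}_i-\m{r}_i)^\intercal \m{Q} (\m{x}_i-\m{r}_i) + \frac{1}{2}\m{u}_i^\intercal \m{R} \m{u}_i$ and using the assumption that the $l$-th row and column of $\m{Q}$ vanish, one checks directly that $c_i$ does not depend on $\{\m{r}_i\}_l$ for any $i$. Since the system dynamics \eqref{eq:System} do not involve the reference either, the optimal controls $\m{u}_i^*$ minimizing the cost-to-go cannot depend on $\{\m{r}_i\}_l$. By Definition~\ref{def:Q}, $Q_{0}$ is a discounted sum of one-step costs evaluated along the optimal trajectory, hence itself independent of $\{\m{r}_i\}_l$ for every $i\in\{k,\dots,k+N\}$. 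Because $Q_{0}=\frac{1}{2}\m{z}_k^\intercal \m{H}\m{z}_k$ with $\m{H}=\m{H}^\intercal$, this independence forces the $l$-th row and column of every $\m{r}_i$-coordinate block of $\m{H}$ to vanish; translating back into the block notation of Theorem~\ref{Theo:Q} yields exactly the three bulleted statements.

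For the counting stage, I would enumerate the non-redundant entries of $\m{H}$ block by block using the sparsity pattern of \eqref{eq:FormDerQ-Fkt_2} together with the additional zeros just established. Writing $\tilde{n}=n-q$, the ``system'' blocks contribute $\frac{1}{2}n(n+1)$ for $\m{h}_{xx}$, $nm$ for $\m{h}_{xu}$ and $\frac{1}{2}m(m+1)$ for $\m{h}_{uu}$. The a priori known blocks $\m{h}_{xr_0}=\m{Q}$ and $\m{h}_{r_0r_0}=-\m{Q}$ contribute no free parameters. The cross-blocks $\m{h}_{xr_i}$ and $\m{h}_{ur_i}$ for $i=1,\dots,N$ contribute $n\tilde{n}$ and $m\tilde{n}$ entries each after discarding the $q$ zero columns identified above. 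From the $r$--$r$ part only $\m{h}_{r_1r_1}$ and $\m{h}_{r_ir_j}$ with $i,j\in\{2,\dots,N\}$ remain, and using the symmetry of $\m{H}$ these contribute $\frac{1}{2}\tilde{n}(\tilde{n}+1)$ for $\m{h}_{r_1r_1}$, $(N-1)\frac{1}{2}\tilde{n}(\tilde{n}+1)$ for the diagonal blocks with $i=j\geq 2$, and $\frac{1}{2}(N-1)(N-2)\tilde{n}^2$ for the strictly upper-triangular off-diagonal pairs. Adding all contributions and collecting terms delivers the stated expression for $L$.

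The main obstacle is the combinatorial bookkeeping in the counting stage: one has to track precisely which $\m{r}_i$--$\m{r}_j$ sub-blocks are genuinely present (the sparsity pattern rules out essentially all cross-blocks involving $\m{r}_0$ or $\m{r}_1$ outside the diagonal), correctly discount the a priori known blocks $\m{h}_{xr_0}$ and $\m{h}_{r_0r_0}$ so that they are not double-counted as unknown parameters, and exploit the symmetry $\m{H}=\m{H}^\intercal$ so that each off-diagonal pair $(i,j)$ with $i<j$ is counted once rather than twice.
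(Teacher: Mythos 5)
Your structural stage is correct and takes a genuinely different route from the paper: the paper reads the zero pattern off the explicit closed-form solution \eqref{eq:Q_kappa} together with \eqref{eq_rho0}--\eqref{eq_xi}, whereas you argue directly that if the $l$-th row and column of $\m{Q}$ vanish then no one-step cost $c_i$ depends on $\left\{\m{r}_i\right\}_l$, hence neither does the minimized cost-to-go nor $Q_{0}$, which for the symmetric quadratic form $\frac{1}{2}\m{z}_k^\intercal\m{H}\m{z}_k$ forces the corresponding rows and columns of $\m{H}$ to vanish. That argument is shorter and more conceptual than the paper's, and it establishes the three bulleted claims.

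The counting stage, however, does not deliver the stated formula. Summing your block contributions gives
\begin{align*}
&\tfrac{1}{2}n(n+1)+nm+\tfrac{1}{2}m(m+1)+N(n+m)(n-q)\\
&\quad+\tfrac{N}{2}(n-q)(n-q+1)+\tfrac{1}{2}(N-1)(N-2)(n-q)^2,
\end{align*}
which falls short of \eqref{eq:L_sparse} by $(n-q)(n-q+1)+\frac{1}{2}m(m-1)$; for system~1 of the paper you would obtain $L=82$ instead of the stated $L=84$. The main source of the discrepancy is your decision to assign zero free parameters to $\m{h}_{xr_0}$ and $\m{h}_{r_0r_0}$. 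In the lemma, ``$\m{h}_{xr_0}=\m{Q}$ and $\m{h}_{r_0r_0}=-\m{Q}$ has been considered'' means that the symmetry and sparsity of these blocks is exploited to reduce each of their counts to $\frac{1}{2}(n-q)(n-q+1)$ --- this is precisely the origin of the ``$+1$'' in the factor $\left(\frac{N}{2}+1\right)$ of the first term --- not that they are dropped from the parametrization; this is also the convention of Lemma~\ref{lemma:Qexakt_long} and of the learning algorithm, which estimates all non-redundant entries of $\m{H}$ including these blocks. Separately, your $\frac{1}{2}m(m+1)$ for $\m{h}_{uu}$ disagrees with the $m^2$ implicit in the term $\left(m+N(n-q)\right)(m+n)$; the two coincide only for $m\le 1$ (as in both of the paper's examples), so your treatment of that block is the more natural count of non-redundant entries of a symmetric block, but as written your total still does not reproduce the expression you are asked to prove.
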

\begin{proof}
	The rather technical proof directly follows from \eqref{eq:Q_kappa} considering \eqref{eq_rho0}--\eqref{eq_xi}.
\end{proof}
Although Lemma~\ref{lemma:Qexakt} is of technical nature, sparsity of $\m{H}$ is essential in order to implement efficient learning controllers as will be discussed in Section~\ref{sectionsimulation}.
\begin{note}
	Based on the specific structure of the analytical solution of $Q_{0}$ derived in Theorem~\ref{Theo:Q}, according to Lemma~\ref{lemma:Qexakt}, $L$ quadratic basis functions are sufficient to parametrize $Q_{0}$ exactly.
\end{note}
\begin{note}
	Equation \eqref{eq:Qhat} and Lemma~\ref{lemma:Qexakt} show that our proposed Q-function generalizes over reference trajectories as the weight vector $\m{w}$ does not depend on a specific reference we intend to follow.
\end{note}

\subsection{Online Learning Algorithm}
In this section, we propose our value iteration based algorithm in order to learn the weights $\m{w}$ online by minimizing the squared temporal difference error.
Let 
\begin{align}\label{eq:Qhat1}
\hat{Q}_{1}&=\m{w}^\intercal\m{\phi}\left(\m{x}_{k+1}, \m{u}_{k+1}, \m{r}_{k+1}, \dots, \m{r}_{k+N}, \m{0}\right)\nonumber \\&=\m{w}^\intercal\m{\phi}\left(\m{z}_{k+1}\right),
\end{align}
where $\m{z}_{k+1}=\mat{\m{x}_{k+1}^\intercal & \m{u}_{k+1}^\intercal & \m{r}_{k+1}^\intercal & \cdots & \m{r}_{k+N}^\intercal & \m{0}_{\vphantom{k+N}}^\intercal}^\intercal$.

The estimated optimal control input $\hat{\m{u}}_{k+\kappa}^*$ based on the estimated Q-function $\left.\hat{Q}_{\kappa}\right|_{\m{u}_{k+\kappa}}$ is defined as
\begin{align}\label{eq:opt_u_estimated}
\hat{\m{u}}_{k+\kappa}^*&=\argmin_{\m{u}_{k+\kappa}}\left.\hat{Q}_{\kappa}\right|_{\m{u}_{k+\kappa}}\nonumber \\ &=\m{L}\mat{\m{x}_{k+\kappa}^\intercal&\m{r}_{k+\kappa+1}^\intercal & \dots & \m{r}_{k+\kappa+N}^\intercal}^\intercal,
\end{align}
where $\m{L}=-\hat{\m{h}}_{uu}^{-1}\mat{\hat{\m{h}}_{ux} & \hat{\m{h}}_{ur_1}  & \cdots & \hat{\m{h}}_{ur_{N}}}$ (cf. \eqref{eq:uoptH}). Here, let $\hat{\m{H}}$ be the estimated matrix $\m{H}$ based on $\m{w}$ and $\hat{\m{h}}$ denote the submatrices of $\hat{\m{H}}$ as in \eqref{eq:FormDerQ-Fkt_2}. Furthermore, $\hat{\m{u}}_{k+\kappa}^*=\m{u}_{k+\kappa}^*$ results if $\m{w}=\m{w}^*$ according to Lemma~\ref{lemma:Qexakt} and the optimal control law $\m{L}^*$ results if $\hat{\m{H}}=\m{H}$.


Then, based on the Bellman-like equation \eqref{Q_def1}, the TD error $\epsilon_k$ \cite{Sutton.1988} is given in Definition~\ref{def:TD}. If ${\m{w}=\m{w}^*}$, $\epsilon_k$ would vanish.

\begin{definition}[TD error]\label{def:TD}
	The temporal difference error $\epsilon_k$, i.e. the approximation error of the Bellman-like equation \eqref{Q_def1} due to the deviation of the weight estimate $\m{w}$ from $\m{w}^*$ is defined as
	\begin{align}\label{eq:td}
	\epsilon_k&=c_k+\gamma\left.\hat{Q}_{1}\right|_{\hat{\m{u}}_{k+1}^*}-\hat{Q}_{0}\nonumber \\ &=c_k+\gamma\m{w}^\intercal \phi\left(\m{z}_{k+1}^*\right)-\m{w}^\intercal \phi\left(\m{z}_{k}\right).
	\end{align}
	Here, $\m{z}_{k+1}^* = \left.\m{z}_{k+1}\right|_{\m{u}_{k+1}=\hat{\m{u}}_{k+1}^*}$.	
\end{definition}



In order to improve the estimated reference-dependent Q-function $\hat{Q}_{0}$ as well as the resulting estimated optimal control $\hat{\m{u}}_k^*$, we employ a value iteration procedure. This iteration consists of a \textit{policy evaluation} which updates the weight estimate $\m{w}^{(i)}$ representing the Q-function and a \textit{policy improvement} step, where, based on the updated Q-function weight $\m{w}^{(i)}$ and the corresponding matrix $\m{H}^{(i)}$, the control law $\m{L}^{(i)}$ is adapted according to \eqref{eq:opt_u_estimated}.

To evaluate $\epsilon_k$ in \eqref{eq:td}, $\hat{\m{u}}_{k+1}^*$ is required. However, as the optimal weight $\m{w}^*$ is unknown a priori, we initialize $\m{w}^{(0)}=\m{0}$ and the estimated optimal control by $\hat{\m{u}}_{k+1}^{*(0)} = \m{0}$ which is achieved by setting $\m{L}^{(0)}=\m{0}$.

In the \textit{policy evaluation} step, the aim is to find an updated $\m{w}^{(i+1)}$ such that $\left({\epsilon}_k^{(i)}\right)^2$ is minimized, where\tdd{zu value iteration ändern!!!}
\begin{align}\label{eq:td_VI}
{\epsilon}_k^{(i)} &= c_k + \gamma{\m{w}^{(i)}}^\intercal\m{\phi}\left(\m{z}_{k+1}^{*{(i)}}\right)-{\m{w}^{(i+1)}}^\intercal\m{\phi}\left(\m{z}_{k}\right)
\end{align}
in analogy to \eqref{eq:td}. In accordance with Lemma~\ref{lemma:Qexakt}, $\m{w}\in\mathbb{R}^L$ follows. Thus, ${\epsilon}_k^{(i)}$ needs to be considered at $M\geq L$ time steps in order to perform a least-squares update, where $M$ is the number of samples used for the policy evaluation. Then, $\m{w}^{(i+1)}$ results from
\begin{align}\label{eq:PE}
\m{w}^{(i+1)}=\argmin_{\m{w}^{(i+1)}}\sum_{j=k-M+1}^{k}\left({\epsilon}_j^{(i)}\right)^2.
\end{align}
Now, we define
\begin{align}\label{eq:Phimatrix}
\m{\Phi}&=\matc{\m{\phi}\left(\m{z}_{k-M+1}\right)& \dots & \m{\phi}\left(\m{z}_{k}\right)}^\intercal
\end{align}
and
\begin{align}\label{eq:cmatrix}
\m{c}=\matc{c_{k-M+1}+\gamma{\m{w}^{(i)}}^\intercal\phi\left(\m{z}_{k-M+2}^{*(i)}\right)\\ \vdots \\ c_k+\gamma{\m{w}^{(i)}}^\intercal\phi\left(\m{z}_{k+1}^{*(i)}\right)}.
\end{align}
If the excitation condition
\begin{align}\label{rankcondition}
\rank\left({\m{\Phi}^\intercal}\m{\Phi}\right) = L
\end{align}
is satisfied, $\m{w}^{(i+1)}$ minimizing \eqref{eq:PE} exists, is unique and given by
\begin{align}\label{eq:LS}
\m{w}^{(i+1)}&=\left({\m{\Phi}^\intercal}\m{\Phi}\right)^{-1}{\m{\Phi}^\intercal}\m{c}
\end{align}
(cf. \cite[Theorem~2.1]{Astrom.1995}).

Then, the \textit{policy improvement} step is based on the new weight estimate $\m{w}^{(i+1)}$ and its corresponding $\m{H}^{(i+1)}$ and results in 
\begin{align}\label{eq:policy_improvement}
\m{L}^{(i+1)}=-{\m{h}_{uu}^{(i+1)}}^{-1}\mat{\m{h}_{ux}^{(i+1)} & \m{h}_{ur_1}^{(i+1)}  & \cdots & \m{h}_{ur_{N}^{(i+1)}}}
\end{align}
(cf. Corollary~\ref{corollary:uoptH}).

With the time step still being fixed at $k$, this iteration is performed until the change in $\m{w}$ stays below a given threshold $e_{\m{w}}$\tdd{ or a maximum number of iterations $i_\text{max}$ is reached}.
Although we would like to evaluate the Q-function corresponding to the target policy $\hat{\m{u}}_k^*$ (note that $\m{z}_{k+1}^{*(i)}$ is used in \eqref{eq:td_VI}), we have not yet discussed how to choose the behavior policy, i.e. the control $\m{u}_k$ that is applied to the system and appears in $c_k$ and $\m{z}_k$ (cf. \eqref{eq:td_VI}). 
During the learning process is active, let
\begin{align}\label{utilde}
\m{u}_k=\tilde{\m{u}}_k^*=\hat{\m{u}}_k^*+\m{\xi}.
\end{align}
Here, the Gaussian noise $\m{\xi} \sim \mathcal{N}_{m}\left(\m{0},\m{\Sigma}\right)$ serves as exploration noise as persistent excitation is required for convergence (cf. \eqref{rankcondition}, \cite{Narendra.1987, Astrom.1995, Lewis.2009}).
Furthermore, if the reference to track is smooth,
additional excitation noise should be added to the reference in order to satisfy condition \eqref{rankcondition}. The complete algorithm is shown in Algorithm~\ref{algorithm} and learns to track an arbitrary reference that is given on a moving horizon of length $N$ without knowledge of the system matrices $\m{A}$ and $\m{B}$. If the system dynamics is not expected to change over time, learning might be stopped after the first complete value iteration based on $M$ data tuples has been performed. This can be done by stopping Algorithm~\ref{algorithm} after line 11. In practice, learning might be enabled again whenever the Bellman error increases which is an indicator for suboptimal weights $\m{w}$.

\begin{algorithm}[tb!]
	\caption{Q-function Tracking Controller}\label{algorithm}
	\begin{algorithmic}[1]
		\Initialize{$M$, $\m{w}=\m{w}^{(0)}=\m{0}$, $\m{L}^{(0)}=\m{0}$}
		\For{$k=1,2,\dots$}
		\State apply $\tilde{\m{u}}_k^*$ \eqref{utilde} to the system
		\If{$k\!\mod M=0$}
		\State $i=0, \m{w}^{(i)}=\m{w}$
		\Do
		\State policy evaluation: $\m{w}^{(i+1)}$ \eqref{eq:PE}
		\State policy improvement: $\m{L}^{(i+1)}$ \eqref{eq:policy_improvement}
		\State $i = i+1$
		\doWhile{$\norm{\m{w}^{(i)}-\m{w}^{(i-1)}}_2>e_{\m{w}}$}
		\State $\m{w}=\m{w}^{(i)}$
		\EndIf
		\EndFor
	\end{algorithmic}
\end{algorithm}
\begin{note}
	The iterative procedure of policy evaluation and policy improvement in Algorithm~\ref{algorithm} is a value iteration (cf.~\cite{Sutton.2018}). This is due to the definition of ${\epsilon}_k$ in \eqref{eq:td_VI}, where $\hat{Q}_{1}$ relies on $\m{w}^{(i)}$ and $\hat{Q}_{0}$ on $\m{w}^{(i+1)}$ (cf.~\cite{Lewis.2009}).	
\end{note}
\begin{note}
	Just as regular Q-learning \cite{Watkins.1992}, our algorithm belongs to the off-policy RL methods \cite{Sutton.2018} as the behavior policy $\tilde{\m{u}}_k^*=\hat{\m{u}}_k^*+\m{\xi}$ is followed while the agent learns the Q-function belonging to the target policy $\hat{\m{u}}_k^*$.
\end{note}


\subsection{Convergence Analysis of the Learning Algorithm}\label{convergence}
In this section, we will provide convergence proofs, i.e. show that the estimated reference-dependent Q-function $\hat{Q}_{0}$ converges to the underlying Q-function $Q_{0}$ and that ${\m{w}^{(i)}\rightarrow \m{w}^*}$, i.e. $\m{H}^{(i)}\rightarrow \m{H}$ as $i\rightarrow \infty$. This also implies that $\m{L}^{(i)}\rightarrow \m{L}^*$, i.e. the value iteration converges to the optimal control law.

Our convergence analysis is structured as follows. First, we prove that the value iteration (i.e. iterating between \eqref{eq:PE} and \eqref{eq:policy_improvement}) is equivalent to a matrix sequence on $\m{H}^{(i)}$. Second, we prove that this sequence of matrices is upper bounded in the sense that $\m{0}\preceq\m{H}^{(i)}\preceq \m{Y}$ while $\m{H}^{(i)}$ is monotonically increasing, i.e. $\m{0}\preceq\m{H}^{(i)}\preceq\m{H}^{(i+1)}$. Hence, the sequence converges. Finally, we show that the converged sequence fulfills the Bellman equation and the corresponding control law is optimal.

The following Lemma~\ref{lemma:VI_matrix_iteration} extends \cite[Lemma~1]{AlTamimi.2007} to the tracking case and shows that our proposed value iteration is equivalent to a matrix sequence on $\m{H}^{(i)}$.

\begin{lemma}\label{lemma:VI_matrix_iteration}
	Let $\m{H}^{(0)} = \m{0}$, $\m{R}^\intercal = \m{R} \succ \m{0}$, $\m{Q}^\intercal = \m{Q} \succeq \m{0}$ and $(\m{A}, \m{B})$ controllable. The value iteration described by \eqref{eq:PE} and \eqref{eq:policy_improvement} is equivalent to the iteration
	\begin{align}\label{eq:iteration_Hi1}
	\m{H}^{(i+1)}=\m{G}+\gamma\m{M}\!\left(\m{L}^{(i)}\right)^\intercal \m{H}^{(i)}\m{M}\!\left(\m{L}^{(i)}\right),
	\end{align}
	where 
	\begin{align}
	\m{G}=\mat{\m{Q}&\m{0}&-\m{Q}&\m{0}\\ \m{0}&\m{R}&\m{0}&\m{0}\\ -\m{Q}&\m{0}&\m{Q}&\m{0}\\ \m{0}&\m{0}&\m{0}&\m{0}}
	\end{align}
	and
	\begin{align}\label{eq:M}
	\m{M}\!\left(\m{L}^{(i)}\right)&=\matc{\m{A}&\m{B}&\m{0}&\m{0}&\m{0}&\cdots&\m{0}\\
		\m{L}_x^{(i)}\m{A}&\m{L}_x^{(i)}\m{B}&\m{0}&\m{0}&\m{L}_1^{(i)}&\cdots&\m{L}_{N-1}^{(i)}\\
		\m{0}&\m{0}&\m{0}&\m{I}_n&\m{0}&\cdots&\m{0}\\
		\m{0}&\m{0}&\m{0}&\m{0}&\m{I}_n&\cdots&\m{0}\\
		\vdots&\vdots&\vdots&\vdots&\vdots&\ddots&\vdots\\
		\m{0}&\m{0}&\m{0}&\m{0}&\m{0}&\cdots&\m{I}_n\\
		\m{0}&\m{0}&\m{0}&\m{0}&\m{0}&\cdots&\m{0}}
	\end{align}
	with $\m{L}^{(i)}=\mat{\m{L}^{(i)}_x&\m{L}^{(i)}_1&\cdots&\m{L}^{(i)}_N}=\m{L}\!\left(\m{H}^{(i)}\right)$ depending on $\m{H}^{(i)}$ as in \eqref{eq:policy_improvement}.
\end{lemma}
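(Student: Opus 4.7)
The plan is to recognize the value iteration as a quadratic form identity on $\m{z}_k$ and then strip off the quadratic form to obtain a matrix identity. Three ingredients are needed: a matrix expression for the one-step cost $c_k$, a matrix expression for $\hat Q_1$ under the estimated optimal control $\hat{\m u}^{*(i)}_{k+1}$, and the transition $\m{z}_{k+1}^{*(i)}=\m{M}(\m{L}^{(i)})\,\m{z}_k$. Once these are in place, the least-squares fit \eqref{eq:LS} under the excitation condition \eqref{rankcondition} forces the desired matrix identity.

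First, I would verify by direct expansion that
\begin{align*}
c_k = \tfrac{1}{2}(\m{x}_k-\m{r}_k)^\intercal\m{Q}(\m{x}_k-\m{r}_k)+\tfrac{1}{2}\m{u}_k^\intercal\m{R}\m{u}_k = \tfrac{1}{2}\m{z}_k^\intercal\m{G}\m{z}_k,
\end{align*}
by inspecting the four block rows/columns of $\m{G}$ associated with $(\m{x}_k,\m{u}_k,\m{r}_k,\m{r}_{k+1:k+N})$; only the $\m{x}\m{x}$, $\m{u}\m{u}$, $\m{r}_0\m{r}_0$ and cross $\m{x}\m{r}_0$ blocks are nonzero, which reproduces the tracking error quadratic form. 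By Lemma~\ref{lemma:Qexakt_long}, the parametrization of $\hat Q_0$ and $\hat Q_1$ is a faithful quadratic form in $\m{H}^{(i+1)}$ and $\m{H}^{(i)}$ respectively, so
\begin{align*}
\hat Q_0 = \tfrac{1}{2}\m{z}_k^\intercal\m{H}^{(i+1)}\m{z}_k,\quad \left.\hat Q_1\right|_{\hat{\m u}^{*(i)}_{k+1}} = \tfrac{1}{2}(\m{z}_{k+1}^{*(i)})^\intercal \m{H}^{(i)}\m{z}_{k+1}^{*(i)}.
\end{align*}

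Next, I would verify that the transition $\m{z}_{k+1}^{*(i)}=\m{M}(\m{L}^{(i)})\m{z}_k$ holds row by row. The first block row uses the system dynamics \eqref{eq:System} to give $\m{x}_{k+1}=\m{A}\m{x}_k+\m{B}\m{u}_k$. The second block row is $\hat{\m u}_{k+1}^{*(i)}=\m{L}^{(i)}_x\m{x}_{k+1}+\sum_{j=1}^{N}\m{L}^{(i)}_{j}\m{r}_{k+1+j}$ from \eqref{eq:opt_u_estimated}; substituting the dynamics for $\m{x}_{k+1}$ and using the moving-horizon convention $\m{r}_{k+N+1}=\m{0}$ (so $\m{L}_N^{(i)}$ drops out) yields precisely the second row of \eqref{eq:M}. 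The remaining block rows implement the shift $\m{r}_{k+j}\mapsto\m{r}_{k+j}$ for $j=1,\dots,N$, together with a terminal zero block for the appended $\m{0}$ in $\m{z}_{k+1}$ (cf.~\eqref{eq:Qhat1}); these are encoded by the identity blocks $\m{I}_n$ and the trailing zero row of $\m{M}(\m{L}^{(i)})$.

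Combining the three ingredients, the TD error \eqref{eq:td_VI} can be written as a quadratic form
\begin{align*}
{\epsilon}_k^{(i)}=\tfrac{1}{2}\m{z}_k^\intercal\bigl(\m{G}+\gamma\m{M}(\m{L}^{(i)})^\intercal\m{H}^{(i)}\m{M}(\m{L}^{(i)})-\m{H}^{(i+1)}\bigr)\m{z}_k.
\end{align*}
Under the excitation condition \eqref{rankcondition}, the least-squares solution \eqref{eq:LS} is the unique minimizer of $\sum_j({\epsilon}_j^{(i)})^2$ and it drives this sum to zero whenever a weight vector achieving a pointwise zero residual exists; since \eqref{eq:iteration_Hi1} produces a symmetric matrix whose non-redundant entries furnish such a weight vector (cf.~Lemma~\ref{lemma:Qexakt_long}), the minimizer is attained there, giving \eqref{eq:iteration_Hi1}. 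The policy improvement step \eqref{eq:policy_improvement} then updates $\m{L}^{(i+1)}=\m{L}(\m{H}^{(i+1)})$ as stated. The main subtlety, and the step I would be most careful about, is the bookkeeping of the moving-horizon reference indices in the second block row of $\m{M}(\m{L}^{(i)})$ and in the trailing zero pad of $\m{z}_{k+1}$, since this is what makes the last coefficient $\m{L}_N^{(i)}$ disappear and what enforces the particular sparsity pattern of \eqref{eq:M}.
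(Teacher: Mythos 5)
Your proposal is correct and follows essentially the same route as the paper's proof: both rest on the quadratic-form identity $c_k+\gamma{\m{w}^{(i)}}^\intercal\m{\phi}(\m{z}_{k+1}^{*(i)})=\tfrac{1}{2}\m{z}_k^\intercal\bigl(\m{G}+\gamma\m{M}(\m{L}^{(i)})^\intercal\m{H}^{(i)}\m{M}(\m{L}^{(i)})\bigr)\m{z}_k$ together with the observation that, under the rank condition, the least-squares update then returns exactly the non-redundant entries of that matrix. You spell out the decomposition of $c_k$ and the transition $\m{z}_{k+1}^{*(i)}=\m{M}(\m{L}^{(i)})\m{z}_k$ that the paper asserts without detail, while the paper instead makes the final step explicit via $\m{c}=\m{\Phi}\,v(\m{H}^{(i+1)})$ and the cancellation $(\m{\Phi}^\intercal\m{\Phi})^{-1}(\m{\Phi}^\intercal\m{\Phi})$; these are interchangeable.
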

\begin{proof}
	See \ref{append:VI_matrix_iteration}.
\end{proof}
The following technical Lemma~\ref{lemma:hilfslemma} is required later for the proof of Lemma~\ref{lemma:monotonically_increasing}.
\begin{lemma}\label{lemma:hilfslemma}
	For $\m{H}^{(0)} = \m{0}$, $\m{R}^\intercal = \m{R} \succ \m{0}$ and $\m{Q}^\intercal = \m{Q} \succeq \m{0}$, $\forall i > 0$: $\m{L}^{(i)}\mat{\m{x}_{k+\kappa}^\intercal&\m{r}_{k+\kappa+1}^\intercal & \dots & \m{r}_{k+\kappa+N}^\intercal}$ is the unique minimizer of
	\begin{align}
	\hat{Q}_{\kappa}^{(i)}=\frac{1}{2}\m{z}_{k+\kappa}^\intercal\m{H}^{(i)}\m{z}_{k+\kappa}.
	\end{align}
\end{lemma}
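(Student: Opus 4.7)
The plan is to establish the result by induction on $i$, showing jointly that $\m{H}^{(i)}\succeq\m{0}$ for all $i\geq 0$ and $\m{h}_{uu}^{(i)}\succ\m{0}$ for all $i\geq 1$, and then extracting the unique minimizer from the first-order optimality condition in $\m{u}_{k+\kappa}$.

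First I would verify the base case $i=1$. Since $\m{H}^{(0)}=\m{0}$, the recursion \eqref{eq:iteration_Hi1} gives $\m{H}^{(1)}=\m{G}$. A direct computation of $\m{z}^\intercal\m{G}\m{z}$ recovers the one-step cost $(\m{x}-\m{r})^\intercal\m{Q}(\m{x}-\m{r})+\m{u}^\intercal\m{R}\m{u}\geq 0$ (here $\m{z}=\mat{\m{x}^\intercal&\m{u}^\intercal&\m{r}^\intercal&\m{0}^\intercal}^\intercal$), so $\m{G}\succeq\m{0}$, and reading off the $(u,u)$ block yields $\m{h}_{uu}^{(1)}=\m{R}\succ\m{0}$ by assumption.

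For the inductive step, suppose $\m{H}^{(i)}\succeq\m{0}$. Then $\m{M}(\m{L}^{(i)})^\intercal\m{H}^{(i)}\m{M}(\m{L}^{(i)})\succeq\m{0}$, and by \eqref{eq:iteration_Hi1} with $\gamma\geq 0$ and $\m{G}\succeq\m{0}$, we obtain $\m{H}^{(i+1)}\succeq\m{0}$. The $(u,u)$ block is a principal submatrix of the right-hand side, hence
\[
\m{h}_{uu}^{(i+1)} = \m{R} + \gamma\,\bigl(\m{M}(\m{L}^{(i)})^\intercal\m{H}^{(i)}\m{M}(\m{L}^{(i)})\bigr)_{uu} \succeq \m{R} \succ \m{0}.
\]
This closes the induction and yields $\m{h}_{uu}^{(i)}\succ\m{0}$ for every $i\geq 1$.

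Finally, for fixed $i\geq 1$, view $\hat{Q}_{\kappa}^{(i)}=\tfrac{1}{2}\m{z}_{k+\kappa}^\intercal\m{H}^{(i)}\m{z}_{k+\kappa}$ as a function of $\m{u}_{k+\kappa}$ with $\m{x}_{k+\kappa}$ and $\m{r}_{k+\kappa},\dots,\m{r}_{k+\kappa+N}$ held fixed. Exploiting the zero $(u,r_0)$ block of $\m{H}$ from \eqref{eq:FormDerQ-Fkt_2}, this reduces to a strictly convex quadratic in $\m{u}_{k+\kappa}$ with Hessian $\m{h}_{uu}^{(i)}\succ\m{0}$, so its unique stationary point is the unique global minimizer. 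Setting $\partial\hat{Q}_{\kappa}^{(i)}/\partial\m{u}_{k+\kappa}=\m{0}$ gives
\[
\m{h}_{uu}^{(i)}\m{u}_{k+\kappa}+\m{h}_{ux}^{(i)}\m{x}_{k+\kappa}+\sum_{j=1}^{N}\m{h}_{ur_j}^{(i)}\m{r}_{k+\kappa+j}=\m{0},
\]
which, upon inversion of $\m{h}_{uu}^{(i)}$, is exactly $\m{L}^{(i)}\mat{\m{x}_{k+\kappa}^\intercal&\m{r}_{k+\kappa+1}^\intercal&\cdots&\m{r}_{k+\kappa+N}^\intercal}^\intercal$ with $\m{L}^{(i)}$ as in \eqref{eq:policy_improvement}. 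The main subtlety to handle carefully is the simultaneous tracking of two claims in the induction (PSD of the full $\m{H}^{(i)}$ together with strict positive definiteness of its $uu$ block), since the $uu$ bound requires the PSD structure of $\m{H}^{(i)}$ from the previous step; everything else is a routine block-matrix calculation using the given form of $\m{G}$ and $\m{M}$.
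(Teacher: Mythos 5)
Your proposal is correct and follows essentially the same route as the paper's proof: positive semidefiniteness of $\m{H}^{(i)}$ propagated through the iteration \eqref{eq:iteration_Hi1}, strict positive definiteness of $\m{h}_{uu}^{(i)}$ inherited from $\m{R}\succ\m{0}$, and then the first- and second-order optimality conditions in $\m{u}_{k+\kappa}$. You merely spell out explicitly the induction and the block computation that the paper asserts as obvious.
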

\begin{proof}
	Due to $\m{H}^{(0)}=\m{0}\succeq\m{0}$, $\m{R} \succ \m{0}$ and $\m{Q} \succeq \m{0}$, \eqref{eq:iteration_Hi1} yields $\m{H}^{(i)}\succeq\m{0}$. Furthermore, due to $\m{R} \succ \m{0}$ it is obvious that $\m{h}_{uu}^{(i)}\succ\m{0}$, $\forall i>0$. Therefore, $\frac{\partial \hat{Q}_{\kappa}^{(i)}}{\partial \m{u}_{k+\kappa}}=\m{0}$ yields $\m{L}^{(i)}$ and $\frac{\partial ^2 \hat{Q}_{\kappa}^{(i)}}{\partial \m{u}_{k+\kappa}^2}\succ\m{0}$ follows due to $\m{h}_{uu}^{(i)}\succ\m{0}$ and completes the proof.
\end{proof}

Define the operator
\begin{align}\label{eq:operator}
F\left(\m{\Omega}^{(i)}, \m{\Gamma}^{(i)}\right)&=\m{G}+\gamma\m{M}\!\left(\m{\Gamma}^{(i)}\right)^\intercal \m{\Omega}^{(i)}\m{M}\!\left(\m{\Gamma}^{(i)}\right),
\end{align}
i.e. $F\left(\m{H}^{(i)}, \m{L}^{(i)}\right)=\m{H}^{(i+1)}$ according to \eqref{eq:iteration_Hi1}.

In order to prove that $\m{H}^{(i)}$ given according to Lemma~\ref{lemma:VI_matrix_iteration} is upper bounded, the following technical Lemma~\ref{lemma:monotonically_increasing} is required first, which generalizes \cite[Lemma~B.1.1]{Landelius.1997} to cope with the reference-dependent Q-function. Note that knowledge of the exact structure of $\m{H}$ and therefore the analytical solution by means of Theorem~\ref{Theo:Q} plays a crucial role for the extension to the tracking case.

\begin{lemma}\label{lemma:monotonically_increasing}
	Let $\m{W}^{(i)}$ be an arbitrary matrix sequence and $\m{0}\preceq\m{H}^{(0)}\preceq\m{Z}^{(0)}$. Then, given the sequences $\m{Z}^{(i+1)}=F\left(\m{Z}^{(i)},\m{W}^{(i)}\right)$ and $\m{H}^{(i+1)}=F\left(\m{H}^{(i)},\m{L}\left(\m{H}^{(i)}\right)\right)$ with \eqref{eq:operator} it follows that $\m{0}\preceq\m{H}^{(i+1)}\preceq\m{Z}^{(i+1)}$.
\end{lemma}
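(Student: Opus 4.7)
The plan is to proceed by induction on $i$. The base case $\m{0}\preceq\m{H}^{(0)}\preceq\m{Z}^{(0)}$ is part of the hypothesis, so I would assume $\m{0}\preceq\m{H}^{(i)}\preceq\m{Z}^{(i)}$ and establish the two one-sided bounds for $\m{H}^{(i+1)}$ separately.

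For the lower bound $\m{H}^{(i+1)}\succeq\m{0}$, I would first argue $\m{G}\succeq\m{0}$ via the identity $\m{z}_k^\intercal \m{G}\m{z}_k = (\m{x}_k-\m{r}_k)^\intercal \m{Q}(\m{x}_k-\m{r}_k) + \m{u}_k^\intercal \m{R}\m{u}_k\geq 0$, which uses $\m{Q}\succeq\m{0}$ and $\m{R}\succ\m{0}$. By the induction hypothesis $\m{H}^{(i)}\succeq\m{0}$, so the congruence $\m{M}(\m{L}(\m{H}^{(i)}))^\intercal \m{H}^{(i)}\m{M}(\m{L}(\m{H}^{(i)}))$ is also positive semidefinite. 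Adding two PSD matrices gives $\m{H}^{(i+1)}\succeq\m{0}$.

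For the upper bound $\m{H}^{(i+1)}\preceq\m{Z}^{(i+1)}$, the approach is to compare the quadratic forms $\m{z}_k^\intercal\m{H}^{(i+1)}\m{z}_k$ and $\m{z}_k^\intercal\m{Z}^{(i+1)}\m{z}_k$ for an arbitrary $\m{z}_k$. By \eqref{eq:iteration_Hi1} both unfold as $\m{z}_k^\intercal\m{G}\m{z}_k$ plus $\gamma$ times a quadratic form in a transformed vector: $\tilde{\m{z}}=\m{M}(\m{L}(\m{H}^{(i)}))\m{z}_k$ in one case and $\bar{\m{z}}=\m{M}(\m{W}^{(i)})\m{z}_k$ in the other. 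Inspection of \eqref{eq:M} shows that only the block row producing $\m{u}_{k+1}$ depends on the policy, so $\tilde{\m{z}}$ and $\bar{\m{z}}$ agree in every coordinate except the $\m{u}_{k+1}$-slot. By Lemma~\ref{lemma:hilfslemma} applied at $\kappa=1$, the policy $\m{L}(\m{H}^{(i)})$ yields the unique minimizer of $\tfrac{1}{2}\m{z}^\intercal\m{H}^{(i)}\m{z}$ over the $\m{u}_{k+1}$-coordinate with the remaining coordinates held fixed, hence $\tilde{\m{z}}^\intercal \m{H}^{(i)}\tilde{\m{z}}\leq\bar{\m{z}}^\intercal \m{H}^{(i)}\bar{\m{z}}$. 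Combining with $\m{H}^{(i)}\preceq\m{Z}^{(i)}$ yields $\tilde{\m{z}}^\intercal \m{H}^{(i)}\tilde{\m{z}}\leq\bar{\m{z}}^\intercal \m{Z}^{(i)}\bar{\m{z}}$, and adding $\m{z}_k^\intercal\m{G}\m{z}_k$ on both sides gives $\m{z}_k^\intercal\m{H}^{(i+1)}\m{z}_k\leq\m{z}_k^\intercal\m{Z}^{(i+1)}\m{z}_k$. Since $\m{z}_k$ was arbitrary, $\m{H}^{(i+1)}\preceq\m{Z}^{(i+1)}$ follows.

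The main obstacle I anticipate is keeping the structural bookkeeping transparent: one must verify explicitly that the block row of $\m{M}$ generating $\m{u}_{k+1}$ is the only policy-dependent one, while the state block is determined by $\m{A},\m{B}$ alone and the remaining rows are fixed identity-shift selectors realising the reference shift with trailing zero for the out-of-horizon entry $\m{r}_{k+N+1}$. Once this structural alignment is spelled out, the greedy-improvement step of Lemma~\ref{lemma:hilfslemma} combines cleanly with the monotonicity hypothesis to close the induction, and no further ingredient beyond the PSD orderings of $\m{Q}$ and $\m{R}$ is required.
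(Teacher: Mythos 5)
Your proof is correct and follows essentially the same route as the paper's: induction on $i$, positive semidefiniteness of $\m{G}$ for the lower bound, and, for the upper bound, the combination of the greedy-policy optimality from Lemma~\ref{lemma:hilfslemma} (exploiting that only the $\m{u}_{k+1}$-block of $\m{M}$ is policy-dependent) with the fact that congruence preserves the Loewner order. The only cosmetic difference is the pivot in the sandwich: you compare through $F\left(\m{H}^{(i)},\m{W}^{(i)}\right)$ using optimality of $\m{L}\left(\m{H}^{(i)}\right)$ with respect to $\m{H}^{(i)}$, whereas the paper routes through $F\left(\m{H}^{(i)},\m{L}\left(\m{Z}^{(i)}\right)\right)$ and $F\left(\m{Z}^{(i)},\m{L}\left(\m{Z}^{(i)}\right)\right)$ — the same two ingredients arranged in a slightly longer chain.
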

\begin{proof}
	See \ref{append:monotonically_increasing}.
\end{proof}
In the next step, upper boundedness of $\m{H}^{(i)}$ in the sense that $\m{0}\preceq\m{H}^{(i)}\preceq \m{Y}$ is shown. For the regulation case, boundedness was shown in \cite[Lemma~B.1.2]{Landelius.1997}. In contrast to that, we consider the specific structure of the iteration in the tracking case (cf. \eqref{eq:iteration_Hi1}) and prove boundedness of $\m{H}^{(i)}$ for the more generalized tracking formulation. For reasons of self-consistency, the complete proof is given.

\begin{lemma}\label{lemma:upper_bounded}
	Let $\left(\m{A}, \m{B}\right)$ be controllable and $\m{H}^{(i)}$ be the sequence \eqref{eq:iteration_Hi1} with $\m{H}^{(0)}=\m{0}$. Then, there exists $\m{Y}$ such that $\m{0}\preceq\m{H}^{(i)}\preceq\m{Y}$.
\end{lemma}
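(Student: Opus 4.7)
The plan is to establish an upper bound by constructing a comparison sequence driven by a fixed, suitably stabilizing feedback and then invoking Lemma~\ref{lemma:monotonically_increasing}. The positivity $\m{H}^{(i)}\succeq\m{0}$ is already available from the proof of Lemma~\ref{lemma:hilfslemma} (it follows inductively from $\m{H}^{(0)}=\m{0}$, $\m{R}\succ\m{0}$, $\m{Q}\succeq\m{0}$, and the fact that $\m{G}\succeq\m{0}$, since $v^\intercal\m{G}v=(v_1-v_3)^\intercal\m{Q}(v_1-v_3)+v_2^\intercal\m{R}v_2\ge0$). So only the upper bound needs work.

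First I would exploit controllability of $(\m{A},\m{B})$ to choose, by pole placement, a matrix $\m{W}_x\in\mathbb{R}^{m\times n}$ such that every eigenvalue of $\m{A}+\m{B}\m{W}_x$ has modulus strictly less than $1/\sqrt{\gamma}$. Setting $\m{W}=[\m{W}_x,\m{0},\dots,\m{0}]$ and $\m{W}^{(i)}=\m{W}$ for all $i$, I would analyze the spectrum of $\m{M}(\m{W})$ in \eqref{eq:M} by noting the block-triangular decomposition: the reference sub-block (the lower-right shift pattern of $\m{I}_n$'s) is nilpotent and hence contributes only zero eigenvalues, while the non-zero eigenvalues coincide with those of the $2\times 2$ block $\bigl[\begin{smallmatrix}\m{A}&\m{B}\\ \m{W}_x\m{A}&\m{W}_x\m{B}\end{smallmatrix}\bigr]$; a short calculation (eigenvector ansatz $v_u=\m{W}_x v_x$) shows these are precisely the eigenvalues of $\m{A}+\m{B}\m{W}_x$. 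Consequently $\sqrt{\gamma}\,\m{M}(\m{W})$ is Schur.

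Next I would define the comparison sequence $\m{Z}^{(0)}=\m{0}$, $\m{Z}^{(i+1)}=F(\m{Z}^{(i)},\m{W})$ with $F$ as in \eqref{eq:operator}. Because $\m{G}\succeq\m{0}$, the map $\m{Z}\mapsto\m{G}+\gamma\m{M}(\m{W})^\intercal\m{Z}\m{M}(\m{W})$ preserves the Löwner order, so $\m{Z}^{(i)}$ is monotonically non-decreasing. Since $\sqrt{\gamma}\,\m{M}(\m{W})$ is Schur, the discrete Lyapunov equation $\m{Z}^\infty=\m{G}+\gamma\m{M}(\m{W})^\intercal\m{Z}^\infty\m{M}(\m{W})$ has the unique finite PSD solution
\begin{align}
\m{Z}^\infty=\sum_{j=0}^{\infty}\gamma^{j}\bigl(\m{M}(\m{W})^\intercal\bigr)^{j}\m{G}\,\m{M}(\m{W})^{j},
\end{align}
and an easy induction in $i$ gives $\m{Z}^{(i)}\preceq\m{Z}^\infty$. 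Since $\m{H}^{(0)}=\m{Z}^{(0)}=\m{0}$, Lemma~\ref{lemma:monotonically_increasing} with this particular choice of $\m{W}^{(i)}$ yields $\m{H}^{(i)}\preceq\m{Z}^{(i)}\preceq\m{Z}^\infty$, and taking $\m{Y}:=\m{Z}^\infty$ finishes the argument.

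The main obstacle is the spectral analysis of $\m{M}(\m{W})$: unlike the pure regulation setting of \cite[Lemma~B.1.2]{Landelius.1997}, the transition matrix is augmented with the reference shift, so one must verify that this augmentation does not introduce eigenvalues outside the disk of radius $1/\sqrt{\gamma}$. Once the block-triangular structure and nilpotence of the shift are exploited, existence of a stabilizing $\m{W}_x$ follows directly from controllability, and the remaining steps reduce to the standard discrete Lyapunov convergence argument.
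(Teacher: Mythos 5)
Your proposal is correct and follows essentially the same route as the paper: a comparison sequence $\m{Z}^{(i+1)}=F\left(\m{Z}^{(i)},\m{W}\right)$ driven by a fixed stabilizing gain (whose existence follows from controllability), domination of $\m{H}^{(i)}$ via Lemma~\ref{lemma:monotonically_increasing}, and a spectral argument exploiting the nilpotence of the reference-shift block of $\m{M}$ so that the nonzero eigenvalues reduce to those of $\m{A}+\m{B}\m{W}_x$. The only cosmetic differences are that you bound $\m{Z}^{(i)}$ by the closed-form Lyapunov series $\m{Z}^\infty$ where the paper vectorizes the increments with a Kronecker product and sums a geometric series, and that you identify the eigenvalues of the state--input block by an eigenvector ansatz rather than by showing $\m{D}^i\rightarrow\m{0}$.
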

\begin{proof}
	See \ref{append:upper_bounded}.
\end{proof}
We will now get to the main result of our convergence analysis and show that the proposed value iteration converges to the optimal weight vector $\m{w}^*$ and the optimal control law $\m{L}^*$ in the tracking case described by Problem~\ref{problem1}.
\begin{theorem}
	Let $\m{R}^\intercal = \m{R} \succ \m{0}$, $\m{Q}^\intercal = \m{Q} \succeq \m{0}$, $(\m{A}, \m{B})$ controllable and $\m{w}^{(0)}=\m{0}$, i.e. $\m{H}^{(0)} = \m{0}$. Then, iterating between \eqref{eq:PE} and \eqref{eq:policy_improvement} yields $\m{H}^{(i)}\rightarrow\m{H}$, i.e. $\m{w}^{(i)}\rightarrow\m{w}^*$ as well as $\m{L}^{(i)}\rightarrow\m{L}^*$.
\end{theorem}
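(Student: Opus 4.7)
The plan is to analyze the matrix sequence $\m{H}^{(i)}$ directly: show it is monotonically non-decreasing in the Loewner order, uniformly upper bounded, hence convergent, and then identify its limit with the analytical solution $\m{H}$. The first step is conceptual bookkeeping: Lemma~\ref{lemma:VI_matrix_iteration} reduces the value iteration \eqref{eq:PE}--\eqref{eq:policy_improvement} to the deterministic matrix recursion $\m{H}^{(i+1)} = F(\m{H}^{(i)}, \m{L}(\m{H}^{(i)}))$ with $F$ from \eqref{eq:operator}. Starting from $\m{H}^{(0)} = \m{0}$, the argument inside the proof of Lemma~\ref{lemma:hilfslemma} already yields $\m{H}^{(i)} \succeq \m{0}$ and $\m{h}_{uu}^{(i)} \succeq \m{R} \succ \m{0}$ for every $i$, so the feedback $\m{L}(\m{H}^{(i)})$ is well defined throughout the iteration.

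Second, I would establish monotonicity $\m{H}^{(i)} \preceq \m{H}^{(i+1)}$ via Lemma~\ref{lemma:monotonically_increasing} applied to a shifted copy of the sequence. Setting $\tilde{\m{H}}^{(j)} := \m{H}^{(j+1)}$, the shifted sequence obeys the same recursion; since $\m{H}^{(0)} = \m{0} \preceq \m{H}^{(1)} = \tilde{\m{H}}^{(0)}$ (immediate because $\m{H}^{(1)} = \m{G} \succeq \m{0}$), Lemma~\ref{lemma:monotonically_increasing} with the arbitrary sequence $\m{W}^{(j)} = \m{L}(\tilde{\m{H}}^{(j)})$ yields $\m{H}^{(j+1)} \preceq \tilde{\m{H}}^{(j+1)} = \m{H}^{(j+2)}$ for all $j\geq 0$. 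Combined with the uniform bound $\m{H}^{(i)} \preceq \m{Y}$ from Lemma~\ref{lemma:upper_bounded}, the sequence $\{\m{H}^{(i)}\}$ is monotonically non-decreasing and bounded, and therefore converges entrywise to some $\m{H}^{\infty}\succeq \m{0}$.

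Third, I would pass to the limit in the recursion. Because $F(\m{X},\m{W})$ is polynomial in both arguments and $\m{L}(\m{X})$ is continuous on $\{\m{X} : \m{h}_{uu}(\m{X}) \succ \m{0}\}$, and because $\m{h}_{uu}^{(i)} \succeq \m{R}$ passes to the limit, taking $i\to\infty$ in $\m{H}^{(i+1)} = F(\m{H}^{(i)}, \m{L}(\m{H}^{(i)}))$ yields the Bellman-type fixed-point equation $\m{H}^{\infty} = F(\m{H}^{\infty}, \m{L}(\m{H}^{\infty}))$. Theorem~\ref{Theo:Q} provides $\m{H}$ itself as a PSD solution of this equation, so after uniqueness is settled one concludes $\m{H}^{\infty} = \m{H}$. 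The entrywise correspondence in Lemma~\ref{lemma:Qexakt} then translates this into $\m{w}^{(i)} \to \m{w}^{\ast}$, and continuity of the feedback formula \eqref{eq:policy_improvement} (again guaranteed by $\m{h}_{uu}^{(i)} \succeq \m{R} \succ \m{0}$ uniformly) gives $\m{L}^{(i)} \to \m{L}^{\ast}$.

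The main obstacle I anticipate is precisely that uniqueness step, $\m{H}^{\infty} = \m{H}$. Exhibiting $\m{H}$ as one PSD fixed point of $\m{X}\mapsto F(\m{X},\m{L}(\m{X}))$ is not by itself enough; one must rule out spurious fixed points. In the pure regulation case this reduces to uniqueness of the stabilizing PSD solution of a discrete algebraic Riccati equation under $(\m{A},\m{B})$ controllable and $\m{R}\succ \m{0}$. In the present tracking setting I would lift this to the enlarged vector $\m{z}_k$ by exploiting the block structure of $\m{M}$ in \eqref{eq:M}: the reference-shift block acts as pure nilpotent propagation, so the Riccati-type sub-equation governing the $(\m{x},\m{u})$-block decouples and the classical DARE uniqueness applies there, after which the remaining reference-coupling sub-blocks of $\m{H}^{\infty}$ are pinned down by linear equations driven by that unique block. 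Making this decoupling precise using the explicit sparsity pattern provided by Theorem~\ref{Theo:Q} is the technical heart of the argument.
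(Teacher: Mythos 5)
Your proposal follows essentially the same route as the paper's proof: reduce the value iteration to the matrix recursion via Lemma~\ref{lemma:VI_matrix_iteration}, establish $\m{0}\preceq\m{H}^{(i)}\preceq\m{H}^{(i+1)}\preceq\m{Y}$ via Lemmas~\ref{lemma:monotonically_increasing} and~\ref{lemma:upper_bounded}, conclude convergence to a fixed point of $F\left(\cdot,\m{L}(\cdot)\right)$, and identify the limit with $\m{H}$. Your shifted-sequence application of Lemma~\ref{lemma:monotonically_increasing} is a slightly more direct way to obtain monotonicity than the paper's auxiliary sequence $\m{Z}^{(i+1)}=F\left(\m{Z}^{(i)},\m{L}\left(\m{H}^{(i+1)}\right)\right)$ followed by an extra induction, and your explicit concern about uniqueness of the positive-semidefinite fixed point is well placed: the paper simply asserts $\m{H}^{(\infty)}=\m{H}$ once the TD error vanishes, so the DARE-uniqueness/decoupling argument you sketch would, if carried out, make that final identification more rigorous than the published proof.
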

\begin{proof}
	According to Lemma~\ref{lemma:VI_matrix_iteration}, the value iteration is equivalent to iterating on $\m{H}^{(i)}$ (cf. \eqref{eq:iteration_Hi1}). With $\m{Z}^{(0)}=\m{H}^{(0)}$ and $\m{Z}^{(i+1)}=F\left(\m{Z}^{(i)}, \m{L}\left(\m{H}^{(i+1)}\right)\right)$, Lemma~\ref{lemma:monotonically_increasing} yields $\m{0}\preceq\m{H}^{(i)}\preceq\m{Z}^{(i)}$. With $\m{H}^{(0)}=\m{0}$ follows $\m{H}^{(1)}=\m{G}\succeq\m{0}$ and hence $\m{H}^{(1)}-\m{Z}^{(0)}\succeq\m{0}$. 
	The proof is drawn by induction: Assume the induction hypothesis $\m{H}^{(i)}-\m{Z}^{(i-1)}\succeq\m{0}$. Then,
	\begin{align}
	&\m{H}^{(i+1)}-\m{Z}^{(i)}\nonumber \\ &=\gamma \m{M}\left(\m{L}\left(\m{H}^{(i)}\right)\right)^\intercal\left(\m{H}^{(i)}-\m{Z}^{(i-1)}\right)\m{M}\left(\m{L}\left(\m{H}^{(i)}\right)\right)\nonumber \\&\succeq\m{0}.
	\end{align}
	This implies $\m{0}\preceq\m{H}^{(i)}\preceq\m{Z}^{(i)}\preceq\m{H}^{(i+1)}$.
	
	As the matrix sequence is upper bounded by $\m{Y}$ according to Lemma~\ref{lemma:upper_bounded} and $\m{0}\preceq\m{H}^{(i)}\preceq\m{H}^{(i+1)}$, the limit $\m{H}^{(\infty)}$ exists, i.e. the value iteration converges to
	\begin{align}
	\m{H}^{(\infty)}=\m{G}+\gamma\m{M}\left(\m{L}\left(\m{H}^{(\infty)}\right)\right)^\intercal\m{H}^{(\infty)}\m{M}\left(\m{L}\left(\m{H}^{(\infty)}\right)\right).
	\end{align}
	Furthermore, $\m{L}\left(\m{H}^{(\infty)}\right)$ minimizes $\hat{Q}_{0}^{(\infty)}$ according to Lemma~\ref{lemma:hilfslemma}. Thus, 
	\begin{align}
	\lim_{i\rightarrow \infty}\epsilon_k^{(i)}&=c_k+\gamma{\m{z}_{k+1}^{*}}^\intercal\m{H}^{(\infty)}\m{z}_{k+1}^{*}\nonumber \\ &-\m{z}_k^\intercal \, F\left(\m{H}^{(\infty)}, \m{L}\left(\m{H}^{(\infty)}\right)\right)\m{z}_k = 0.
	\end{align}
	Therefore, $\m{H}^{(i)}\rightarrow\m{H}^{(\infty)}=\m{H}$, i.e. $\m{w}^{(i)}\rightarrow\m{w}^*$ and $\m{L}^{(i)}\rightarrow\m{L}^*$ which completes the proof.
	
\end{proof}

\section{Results}\label{simulationResults}\label{sectionsimulation}

In order to validate our proposed method, we show results for two LQ-tracking problems with unknown system dynamics. We furthermore compare our results with an RL tracking method which assumes that the reference can be described by a time-invariant exo-system $\m{f}_{\text{ref}}$.

For excitation purposes (cf. \eqref{rankcondition}), the variance of the input excitation $\m{\xi}$ in \eqref{utilde} is set to $\m{\Sigma}=0.1$ and Gaussian noise $\xi_{\text{ref}}\sim \mathcal{N}\left({0},0.1\right)$ is added to the reference trajectory which we intend to track while learning is active. We furthermore choose the number of data tuples to $M=1.2L$, with $L$ according to \eqref{eq:L_sparse}, the stopping criterion to $e_{\m{w}}=\SI{1e-6}{}$, the discount factor to $\gamma = 0.9$, the horizon on which the reference is known to $N=10$ and finally $\m{x}_0= \m{0}$ for all experiments.

\subsection{Simulation Examples}
The first system is a second-order rotatory mass-spring-damper system, whereas the second system is a sixth-order linear single-track steering model. Both systems were initially given in continuous time and hence discretized by means of a Tustin approximation with a sampling time of $\SI{1}{\second}$.
\subsubsection{System~1}
Consider a rotatory mass-spring-damper system modeled by the discrete-time second order linear state space representation
\begin{align}\label{sys_diskret}
\m{x}_{k+1}=\matc{0.99&0.9\\-0.02&0.8}\m{x}_k+\matc{0.01\\0.02}u_k.
\end{align}

The control input $u_k$ is a torque command applied to the system. Note that this system is not known to the controller and only needed for simulation as well as validation purposes. Furthermore, let
\begin{align}
\m{Q}=\text{diag}(100, 0) \text{ and } \m{R}=1
\end{align}	
be the parameters of the cost function \eqref{eq:Kostena}, i.e. we focus on tracking a reference angle $x_{1,\text{ref}}=\alpha_\text{ref}$ that is given on a moving horizon of length $N=10$.

\subsubsection{System~2}
Our second example system is a sixth order linear single-track steering model
\begin{equation}\label{continuous_ss}
\m{x}_{k+1} = {\m{A}}\m{x}_k + {\m{B}}u_k
\end{equation}
with
\begin{equation*}\setlength{\arraycolsep}{3pt}
{\m{A}} = \begin{bmatrix*}[r]
-0.741  		&-0.033	&0		&0 	&-\SI{2.0E-4}{}	&-\SI{6.4E-3}{}\\
4.146	&-0.914	&0 		&0	&\SI{4.7E-3}{}	&0.151\\
2.073								&0.043				&1		&0	&\SI{2.4E-3}{}	&0.076\\
23.326							&0.106											&20	&1	&0.022	&0.693\\
23.499		&-2.593		&0		&0	&-0.939 &-2.053\\
11.749								&-1.297											&0		&0	&0.031	&-0.027
\end{bmatrix*}
\end{equation*}
and
\begin{align*}\setlength{\arraycolsep}{3pt}
{\m{B}} = \begin{bmatrix}
-\SI{3.8E-4}{} & 0.0091 & 0.0046 & 0.041 & 0.117  & 0.059
\end{bmatrix}^\intercal,
\end{align*}
where $u_k$ is a torque applied to the steering wheel by the controller. The parameters of the corresponding continuous-time model can be found in \cite{Kopf.2018}.
The system state $\m{x}_k$ is given by
\begin{align} \label{eq:Zustaende Fzg}
\m{x}_k = \begin{bmatrix}
\beta_k & {\psi}_{r,k} & \psi_k & y_k & {\delta}_{v,k} & \delta_k
\end{bmatrix}^\intercal,
\end{align}
with sideslip angle $\beta_k$, yaw angle $\psi_k$, yaw rate ${\psi}_{r,k}$, lateral deviation from the origin $y_k$, steering wheel angle $\delta_k$ and angular velocity ${\delta}_{v,k}$. Geometric relations are depicted in Fig.~\ref{fig:einspurmodell}, where $v$ is the constant velocity ($\SI{20}{\meter\per \second}$ in our example) and $\delta_s = 0.0625\delta$ denotes the steering angle in contrast to the steering wheel angle $\delta$.
\begin{figure}[b!]
\centering
\footnotesize
\usetikzlibrary{fit}
\usetikzlibrary{calc}
\usetikzlibrary{positioning}
\usetikzlibrary{shapes,snakes}
\usetikzlibrary{backgrounds}
\usetikzlibrary{arrows}

\begin{tikzpicture}[scale=0.6]
    \draw [fill=black!50, rounded corners, rotate around={0:(0,0)}] (-1,-0.5) rectangle (1,0.5);
    \draw [fill=black!50, rounded corners, rotate around={-33.69:(6.5,0)}] (5.5,-0.5) rectangle (7.5,0.5);
    \draw [fill=black!80, rotate around={0:(0,0)}] (0,-0.05) rectangle (6.5,0.05);
    \draw [fill=black, rotate around={0:(0,0)}] (3,0) ellipse (.1 and 0.1);

    
%
    \draw [dotted, rotate around={0:(0,0)}] (6.5,0) -- (8.2,0);
    \draw [dotted] (6.5,0) -- (7.91449,-0.94299);
    \draw [] (8.2,0) arc (0:-33.69:1.7);
    \draw [] (7.85,-0.35) node {$\delta_s$};
    \draw [-latex, rotate around={0:(0,0)}] (3,0) -- (5.6,-0.64);
    \draw [] (5.4,-0.9) node {$v$};
    \draw [rotate around={0:(0,0)}] (5.6,0) arc (0:-14:2.6);
	\draw [] (5.25,-0.3) node[] {$\beta$};
	\draw [->, rotate around={0:(0,0)}] (3,-0.5) arc (270:315:0.5);
	\draw [rotate around={0:(0,0)}] (3,-0.5) arc (270:135:0.5);  
	\draw [] (2.4,-0.7) node {${{\psi}_r}$};   
	
	\draw [-latex] (-2,-1.5) -- (-2,0);
	\draw [] (-1.6,-0.75) node {$y$};  
	\draw [] (-2.1,-1.5) -- (-1.9,-1.5);
	\draw [] (-2.4,-1.5) node {$0$};  
\end{tikzpicture}
\caption{Geometric relations of the single-track model.} \label{fig:einspurmodell}
\end{figure}
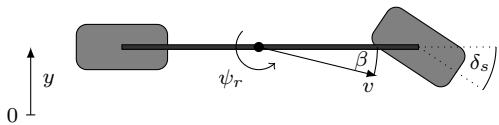
As we desire to track the lateral position $x_{4,\text{ref}} = y_{\text{ref}}$, we choose
\begin{align}
\m{Q}=\text{diag}(0, 0, 0, 100, 0, 0) \text{ and } \m{R}=1.
\end{align}	

\subsection{Evaluation Method}
In order to compare our method with the class of RL tracking algorithms where the reference trajectory is assumed to be generated by a time-invariant exo-system $\m{f}_{\text{ref}}(\m{r}_k)=\m{F}_{\text{ref}}\m{r}_k$, let both the reference angle $\alpha_{\text{ref}}$ (system~1) and the reference lateral position $y_{\text{ref}}$ (system~2) be generated by
\begin{align}\label{eq:refe}
\m{r}_{k+1} &= \underbrace{\mat{0.9801&0.1987\\-0.1987&0.9801}}_{\m{F}_{\text{ref}}}\m{r}_k, \\
\alpha_{\text{ref}} &= y_{\text{ref}} = \mat{1&0}\m{r}_k\label{eq:refe1}
\end{align}
during the training procedure. For comparison reasons, we then train both our proposed method and an algorithm as in \cite{Luo.2016,Kiumarsi.2014} which assumes that the reference always follows the dynamics of an exo-system $\m{f}_{\text{ref}}(\m{r}_k)$ (in the following termed \textit{comparison algorithm}). The comparison algorithm is trained  on the augmented system
\begin{align}\label{eq:augmented_sys}
\m{x}_{\text{aug},k+1}=\matc{\m{A}&\m{0}\\\m{0}&\m{F}_{\text{ref}}}\m{x}_{\text{aug},k}+\matc{\m{B}\\ \m{0}}\m{u}_k,
\end{align}
where $\m{x}_{\text{aug},k}=\mat{\m{x}_k^\intercal&\m{r}_k^\intercal}^\intercal$.
After the training, we vary $\alpha_{\text{ref}} = y_{\text{ref}}$ (arbitrary references such as different frequencies, ramps and steps) in order to show that the controller learned with our method successfully generalizes to these references. In contrast, we show the resulting behavior of the comparison algorithm which is constructed on the assumption that the reference dynamics always follows \eqref{eq:refe}. 

Our evaluation is twofold. On one hand, after the learning process has finished, we analyze the root mean square (RMS) tracking errors $\alpha_{\text{RMS}}$ and $y_{\text{RMS}}$ between the learned tracking behavior by means of the trajectory $\alpha_{\text{learned}}$ and $y_{\text{learned}}$ where the system dynamics is unknown (both for our algorithm and the comparison algorithm) and the optimal solution $\alpha_{\text{opt}}$ and $y_{\text{opt}}$ which results from Theorem~\ref{Theo:Q} and known system dynamics.
On the other hand, we compare the learned weights $\m{w}$ of our algorithm with the optimal solution $\m{w}^*$ (weights corresponding to Theorem~\ref{Theo:Q} and Lemma~\ref{lemma:Qexakt}). In order to achieve comparability for different ranges of $\m{w}$, we normalize the absolute error of each weight with the maximum absolute weight $\max_j \abs{\left\{\m{w}^*\right\}_j}$ and define the average of this normalized absolute error by
\begin{align}\label{eq:e1}
e_{\text{I}}&=\frac{1}{L}\sum_{i=1}^{L}\frac{\abs{\left\{\m{w}\right\}_i-\left\{\m{w}^*\right\}_i}}{\max_j \abs{\left\{\m{w}^*\right\}_j}}.
\end{align}
Its maximum is given by
\begin{align}\label{eq:e2}
e_{\text{II}}&=\max_i \frac{\abs{\left\{\m{w}\right\}_i-\left\{\m{w}^*\right\}_i}}{\max_j \abs{\left\{\m{w}^*\right\}_j}}.
\end{align}
Note that these measures are only reasonable to judge how well our method has learned the unknown optimal weights $\m{w}^*$ as the comparison algorithm does not learn $\m{w}^*$ corresponding to Theorem~\ref{Theo:Q} and Lemma~\ref{lemma:Qexakt} but the optimal weights corresponding to \eqref{eq:augmented_sys}.

%
%
%

\begin{table}[tb!]
	\centering
	\tabulinesep = 0.2mm
	\caption{RMS tracking errors and weight estimation errors.}
	
	\label{table1}
	\begin{tabu}{lcc}
		\toprule
		
		&\multicolumn{2}{c}{system~1}\\
		&&\\
		&proposed method&comparison alg. \cite{Luo.2016,Kiumarsi.2014}\\
		$\alpha_{\text{RMS}}\!$&$\SI{2.1E-3}{}$&$1.55$\\
		$e_{\text{I}}$&$\SI{6.2E-5}{}$&--\\
		$e_{\text{II}}$&$\SI{2.1E-3}{}$&--\\	
		\bottomrule	
		\\
		\toprule
		
		&\multicolumn{2}{c}{system~2}\\
		&&\\
		&proposed method&comparison alg. \cite{Luo.2016,Kiumarsi.2014}\\
		$y_{\text{RMS}}\!$&$\SI{2.9E-5}{}$&$1.01$\\
		$e_{\text{I}}$&$\SI{2.1E-7}{}$&--\\
		$e_{\text{II}}$&$\SI{1.4E-5}{}$&--\\	
		\bottomrule	
	\end{tabu}
\end{table}

%

\subsection{Results of the Adaptive Tracking Controller}
The RMS tracking errors $\alpha_{\text{RMS}}$ for system~1, $y_{\text{RMS}}$ for system~2 and the corresponding weight estimation errors $e_{\text{I}}$ and $e_{\text{II}}$ are given in Table~\ref{table1}.

Plots of the corresponding tracking performances of our proposed method and the comparison method are given in Fig.~\ref{fig:sys1_x} for system~1 and Fig.~\ref{fig:sys2_x} for system~2. Here, the vertical dash-dotted lines indicate the weight update of our method (i.e. when $k=M$). The reference trajectories $\alpha_{\text{ref}}$ and $y_{\text{ref}}$ are depicted in gray. The black dashed lines show the optimal solutions $\alpha_{\text{opt}}$ and $y_{\text{opt}}$ calculated using full system knowledge and the red line the learned behavior $\alpha_{\text{learned, our}}$ and $y_{\text{learned, our}}$ without knowledge of the system matrices $\m{A}$ and $\m{B}$ when using our proposed method. The resulting tracking behavior $\alpha_{\text{learned, comparison}}$ and $y_{\text{learned, comparison}}$ of the comparison method \cite{Luo.2016,Kiumarsi.2014} is depicted in blue.
\begin{figure}[htb!]
	\begin{center}	
		\newlength\fheight 
		\newlength\fwidth
		\setlength\fheight{7cm} 
		\setlength\fwidth{0.43\textwidth} 
		\input{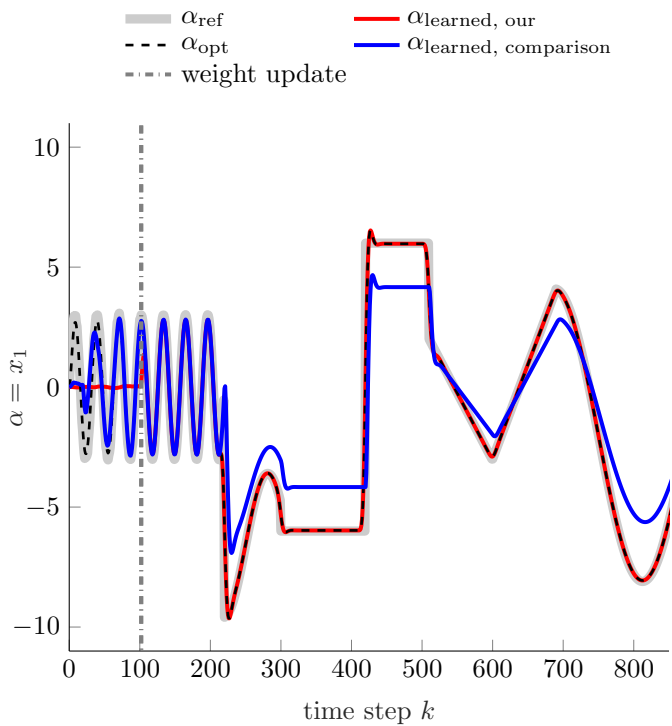}
		\caption{Q-learning based tracking for system~1 (second order rotatory mass-spring-damper model).}\label{fig:sys1_x}	
	\end{center}
\end{figure}
\begin{figure}[htb!]
	\begin{center}	
		\setlength\fheight{7cm} 
		\setlength\fwidth{0.43\textwidth}
		\input{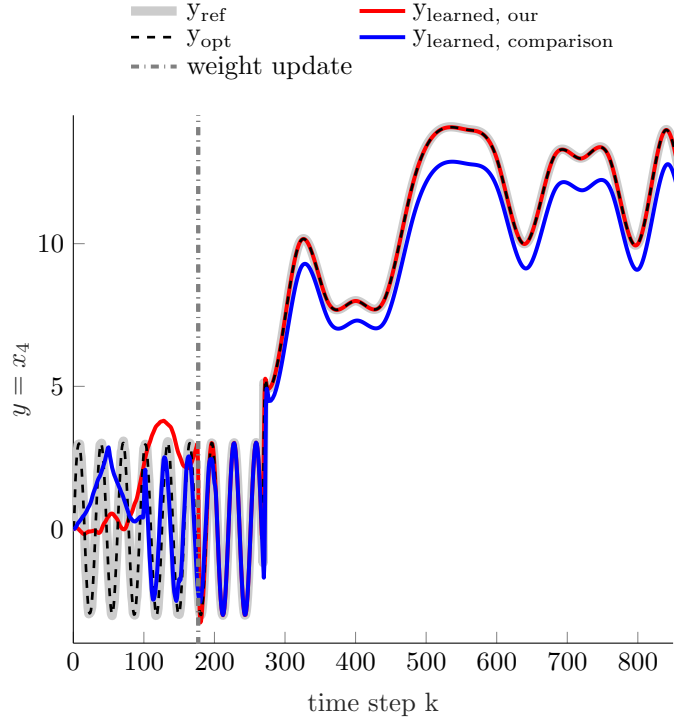}
		\caption{Q-learning based tracking for system~2 (sixth order linear single-track steering model).}\label{fig:sys2_x}	
	\end{center}
\end{figure}

The weight estimation measures $e_{\text{I}}$ and $e_{\text{II}}$ in Table~\ref{table1} indicate that for both systems the optimal Q-function weights have successfully been learned. The decay of $e_{\text{I}}$ and $e_{\text{II}}$ during the value iteration (with iteration index $i$ in Algorithm~\ref{algorithm}) is depicted in Fig.~\ref{fig:sys1_e} (system~1) and Fig.~\ref{fig:sys2_e} (system~2).
\begin{figure}[htb!]
	\begin{center}	
		\setlength\fheight{4cm} 
		\setlength\fwidth{0.43\textwidth} 
%
%
\begin{tikzpicture}
\pgfplotsset{
	compat = 1.8
}
\begin{axis}[%
width=\fwidth,
height=\fheight,
xmin=0,
xmax=30,
xlabel style={font=\color{white!15!black}},
xlabel={iteration $i$},
axis x line*=bottom,
ymin=0,
ymax=1.05,
ytick pos=right,
ylabel={$e_\text{II}$},
ytick={
	0, 0.5, 1
},
line join = round,
yticklabel pos=right,
reverse legend
]
\addplot [color=blue, line width=1.5pt]
  table[row sep=crcr]{%
0	1\\
1	0.999999999991527\\
2	0.935687676615057\\
3	0.751885096048592\\
4	0.483497497089865\\
5	0.234951696739518\\
6	0.0844833696184965\\
7	0.0422934507156081\\
8	0.0321919939727337\\
9	0.0289262049612219\\
10	0.0260230585271198\\
11	0.00839590744626173\\
12	0.0159126538885107\\
13	0.020594426052511\\
14	0.00595664273827021\\
15	0.00352270511995427\\
16	0.00115247619619514\\
17	0.00246915817109376\\
18	0.00285520323014723\\
19	0.00312699003025957\\
20	0.00337273569221423\\
21	0.00279707323403971\\
22	0.00198194563904686\\
23	0.00172059101628957\\
24	0.00204887381106547\\
25	0.00224691542975001\\
26	0.00222132721620205\\
27	0.00217556819843528\\
28	0.00215975340536182\\
29	0.00212747185310962\\
30	0.00209981856484717\\
};\label{plot_one}
\addlegendentry{$e_\text{II}$}
\end{axis}

\begin{axis}[%
width=\fwidth,
height=\fheight,
xmin=0,
xmax=30,
ytick pos=left,
xlabel={iteration $i$},
axis x line*=bottom,
ymin=0,
ymax=0.06,
ylabel={$e_\text{I}$},
line join = round,
reverse legend
]
\addlegendimage{/pgfplots/refstyle=plot_one}\addlegendentry{$e_{\text{II}}$}
\addplot [color=black!20!green, dashed, line width=1.5pt]
table[row sep=crcr]{%
	0	0.0551443768417527\\
	1	0.0509329257868882\\
	2	0.0429836806414018\\
	3	0.0327759385885343\\
	4	0.0226141732122519\\
	5	0.0143081526105289\\
	6	0.00866262221783704\\
	7	0.00544122473698616\\
	8	0.00354103617799151\\
	9	0.00253371799519688\\
	10	0.00167241810317472\\
	11	0.000830156293142492\\
	12	0.000675329463576276\\
	13	0.000543722659439468\\
	14	0.000292873504271034\\
	15	0.000178714916894379\\
	16	0.000109965231524649\\
	17	0.000112295765718014\\
	18	8.58405076180757e-05\\
	19	7.77374474432824e-05\\
	20	8.55293466007243e-05\\
	21	8.50241580641563e-05\\
	22	7.00660217659196e-05\\
	23	5.99103984587873e-05\\
	24	6.09760072811528e-05\\
	25	6.1963646253826e-05\\
	26	6.19144542186293e-05\\
	27	6.28862190253436e-05\\
	28	6.35837585238034e-05\\
	29	6.2904202768688e-05\\
	30	6.19084787536811e-05\\
};
\addlegendentry{$e_\text{I}$}
\end{axis}

\end{tikzpicture}
		\caption{Decay of weight estimation errors during learning for system~1 (second order rotatory mass-spring-damper model). Here, $e_{\text{I}}$ \eqref{eq:e1} denotes the mean and $e_{\text{II}}$ \eqref{eq:e2} the maximum of the element-wise absolute error of $\m{w}$, both normalized with $\max_j \abs{\left\{\m{w}^*\right\}_j}$.}\label{fig:sys1_e}	
	\end{center}
\end{figure}
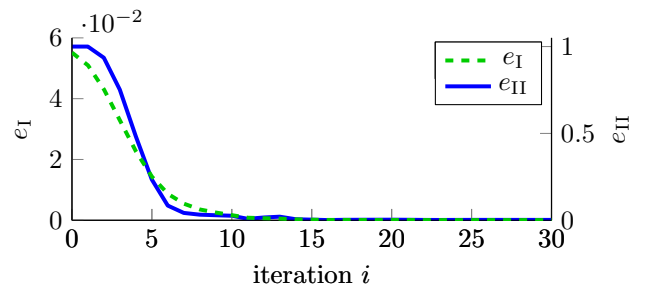
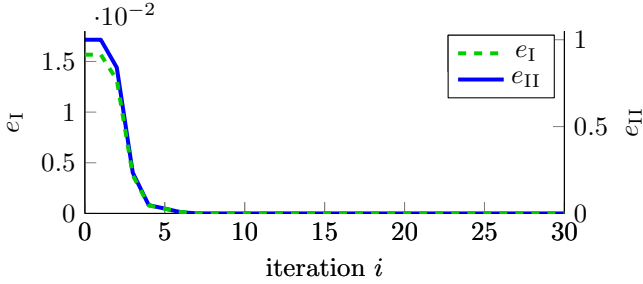
\begin{figure}[htb!]
	\begin{center}	
		\setlength\fheight{4cm} 
		\setlength\fwidth{0.43\textwidth}
%
%
\begin{tikzpicture}
\pgfplotsset{
	compat = 1.8
}
\begin{axis}[%
width=\fwidth,
height=\fheight,
xmin=0,
xmax=30,
xlabel style={font=\color{white!15!black}},
xlabel={iteration $i$},
axis x line*=bottom,
ymin=0,
ymax=1.05,
ytick pos=right,
ylabel={$e_\text{II}$},
ytick={
	0, 0.5, 1
},
line join = round,
yticklabel pos=right,
reverse legend
]
\addplot [color=blue, line width=1.5pt]
table[row sep=crcr]{%
	0	1\\
	1	1.00000000000307\\
	2	0.839704928033393\\
	3	0.234396093151505\\
	4	0.0475996288935857\\
	5	0.0280011713411982\\
	6	0.00720310587225692\\
	7	0.000504985704232911\\
	8	0.000246402735499371\\
	9	6.40093707671484e-05\\
	10	6.26167630249259e-05\\
	11	2.63961421831328e-05\\
	12	0.000286062059373486\\
	13	0.000219298338802716\\
	14	0.000119542101403816\\
	15	0.000152778175405868\\
	16	7.77291861230374e-05\\
	17	4.08194392664941e-05\\
	18	1.22379813286464e-05\\
	19	1.22275556488929e-05\\
	20	1.39303001669059e-05\\
	21	1.35866659757101e-05\\
	22	1.45935623612568e-05\\
	23	1.16131814639323e-05\\
	24	1.37156553182297e-05\\
	25	1.39889730599482e-05\\
	26	1.37975115588641e-05\\
	27	1.39228199646871e-05\\
	28	1.38224079437617e-05\\
	29	1.38330702811662e-05\\
	30	1.38397640518059e-05\\
};\label{plot_one2}
\addlegendentry{$e_\text{II}$}

\end{axis}

\begin{axis}[%
width=\fwidth,
height=\fheight,
xmin=0,
xmax=30,
ytick pos=left,
xlabel={iteration $i$},
axis x line*=bottom,
ymin=0,
ymax=0.018,
ylabel={$e_\text{I}$},
line join = round,
reverse legend
]
\addlegendimage{/pgfplots/refstyle=plot_one2}\addlegendentry{$e_{\text{II}}$}
\addplot [color=black!20!green, dashed, line width=1.5pt]
  table[row sep=crcr]{%
0	0.0156695895762089\\
1	0.015664346712202\\
2	0.0131764085912814\\
3	0.00376648656754866\\
4	0.000787724698570746\\
5	0.000467222308836798\\
6	0.000144314877174705\\
7	2.03761301030951e-05\\
8	1.17347402962018e-05\\
9	3.99536732183135e-06\\
10	3.02030136345071e-06\\
11	6.72135419474038e-07\\
12	4.66363271831048e-06\\
13	4.02486298006763e-06\\
14	2.04760997605958e-06\\
15	2.45955578991422e-06\\
16	1.24184251893618e-06\\
17	7.4584632643349e-07\\
18	3.2196924871787e-07\\
19	2.23762456080346e-07\\
20	2.2427621712278e-07\\
21	2.08480784772708e-07\\
22	2.34718117847757e-07\\
23	1.93197224602682e-07\\
24	2.19374635914016e-07\\
25	2.23572632720569e-07\\
26	2.15088508214826e-07\\
27	2.11848252532001e-07\\
28	2.11653664375243e-07\\
29	2.11798639868587e-07\\
30	2.11696267712447e-07\\
};
\addlegendentry{$e_\text{I}$}
\end{axis}

\end{tikzpicture}%
		\caption{Decay of weight estimation errors during learning for system~2 (sixth order linear single-track steering model). Here, $e_{\text{I}}$ \eqref{eq:e1} denotes the mean and $e_{\text{II}}$ \eqref{eq:e2} the maximum of the element-wise absolute error of $\m{w}$, both normalized with $\max_j \abs{\left\{\m{w}^*\right\}_j}$.}\label{fig:sys2_e}	
	\end{center}
\end{figure}

%

\subsection{Discussion}\label{sec:discussion}
Under the excitation condition \eqref{rankcondition}, our learning controller converges to the optimal control law according to Section~\ref{convergence}. 
Comparing $\alpha_{\text{RMS}}$ (system~1), $y_{\text{RMS}}$ (system~2) and considering Fig.~\ref{fig:sys1_x}, Fig.~\ref{fig:sys2_x} and Fig.~\ref{fig:sys1_x_detail} it is obvious that our algorithm is successfully tracking arbitrary references that have not been seen during training. State of the art methods which assume that the reference follows a time-invariant exo-system $\m{f}_{\text{ref}}$ (e.g. \cite{Luo.2016,Kiumarsi.2014}) are very effective as long as this assumption holds but their tracking performance decreases as soon as the reference to track deviates from the sine described by \eqref{eq:refe}--\eqref{eq:refe1}. This behavior is hardly surprising as these controllers were specifically designed under the assumption of time-invariant exo-system dynamics.

Due to the explicit dependency of our Q-function on the reference on a moving horizon $N$, the learned weights generalize to references that are unknown during the learning procedure such as the ramps, steps and curves in the examples. Due to the optimal behavior according to \eqref{eq:Kostena} our proposed controller exhibits predictive rather than only reactive behavior as can be seen in Fig.~\ref{fig:sys1_x_detail} which depicts a more detailed view of the step in Fig.\ref{fig:sys1_x}.

\begin{figure}[htb!]
	\begin{center}	
		\setlength\fheight{6cm} 
		\setlength\fwidth{0.42\textwidth} 
		\input{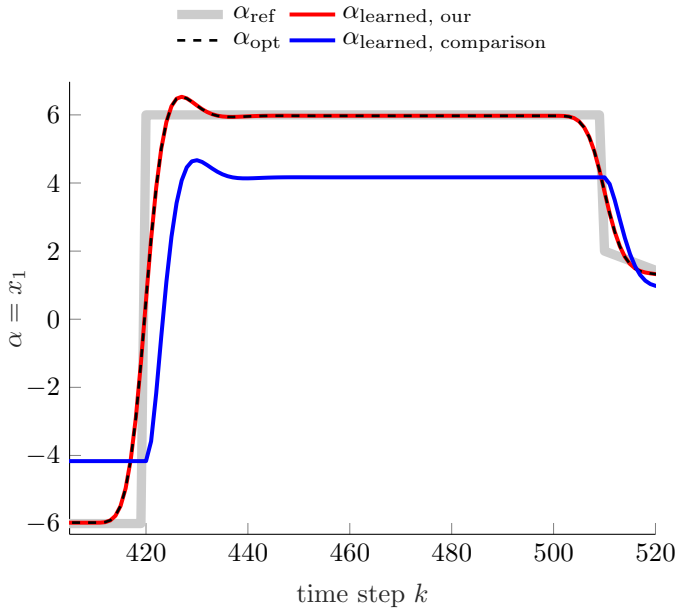}
		\caption{More detailed view of Fig.~\ref{fig:sys1_x} (system~1) to visualize the predictive behavior of our controller due to the moving horizon $N$.}\label{fig:sys1_x_detail}	
	\end{center}
\end{figure}


We would further like to point out that exact knowledge of the structure of $\m{H}_K$ and thus $\m{H}$ which results from Theorem~\ref{Theo:Q} is very beneficial if not vital for the Q-learning method to work efficiently as it helps to reduce the number of weights that have to be learned. If one would only assume $\m{H}$ to be quadratic and symmetric, $L=325$ (system~1) and $L=2701$ (system~2) weights would have to be estimated. Considering Lemma~\ref{lemma:Qexakt_long}, these numbers reduce to $L=247$ (system~1) and $L=2011$ (system~2) and exploiting the sparsity properties of $\m{Q}$ according to Lemma~\ref{lemma:Qexakt}, $L=84$ (system~1) and $L=146$ (system~2) result which renders the complexity tractable.

The choice of the moving horizon length $N$ depends on the available information regarding the reference to track. Furthermore, the larger $N$, the more predictive the learned controller will be but the more unknown weights $\m{w}$ need to be learned (see \eqref{eq:L_sparse}). Thus, an appropriate choice of $N$ obviously depends on the specific application.

%

\tdd{erwähnen: für praktische Umsetzbarkeit wird bei instabilen Systemen während des Datensammelns der Zustand unter Umständen sehr groß. Das kann zu schlecht konditionierter Matrix Phi transp. Phi führen, damit zu numerischen Problemen. Ein initial stabilisierendes suboptimales Regelgesetz ist also aus numerischer Sicht hilfreich; auch hilft in diesem Fall, zunächst nur kurz Daten zu sammeln und für Ninit = 1 ein stabilisierendes Regelgesetz zu lernen.}

\tdd{alles weitere weg}
\tdd{laut Michael: 1. Horizontlänge zu Beginn kurz, Strategie am Anfang rasch verbessern ist sinnvoller, als lange Zeit komplett nichts zu tun. 2. Referenz mit/ohne Rauschen, mit Rauschen auf Referenz wichtig, da excitation auch für Referenz gewährleistet sein muss, sonst lernt Regler nur auf Referenz, die zu wenig Freuquenzgehalt beinhaltet. 3. Vergleich (Trajektorienfehler) kurzer/langer Horizont, mit/ohne Rauschen auf Referenz, einmal mit LERNtrajektorie und einmal mit TESTtrajektorie!!!}

\section{Conclusion}\label{conclusion}
In this paper, we proposed a new Reinforcement-Learning-based algorithm that is able to track an arbitrary reference trajectory which is given on a moving horizon while the system dynamics is unknown. In contrast to state-of-the-art methods that are based on RL respectively Adaptive Dynamic Programming, our method does not require the reference trajectory to be generated by an exo-system. It explicitly incorporates arbitrary reference values in a new Q-function. This Q-function, which is constructed such that its minimizing control is part of the solution of the optimal LQ tracking problem, generalizes to arbitrary reference trajectories given on a moving horizon.

We showed that the analytical solution to this Q-function has a specific structure w.r.t. the current state and control as well as the reference on the given horizon. Based thereon, sparsity properties of the resulting structure were exploited in order to reduce the Q-function weights that have to be estimated.
The temporal difference error of the reference-dependent Q-function serves as a target in order to learn the optimal tracking behavior online when the system dynamics is unknown. Here, the choice of basis functions is based on the findings regarding the specific structure of the analytical solution. We proved that this iterative algorithm converges to the optimal solution.

\appendix
\section{Proof of Theorem~\ref{Theo:Q}}\label{append:notations}
Theorem~\ref{Theo:Q} is vital for an efficient Q-function parametrization as it yields the exact structure of the analytical solution of the novel reference-dependent Q-function which reduces the numbers of function approximation weights needed (see Lemma~\ref{lemma:Qexakt_long} and Lemma~\ref{lemma:Qexakt} and the discussion regarding the number of weights to be learned in Section~\ref{sec:discussion}).

Although the notation is complex, it is required to keep the technical proof correct. This is due to the fact that clear distinction of the current time step $k$ and the time step $\kappa$ in the current optimization horizon $K$ starting at $k$ is required. The main idea of this proof is to use dynamic programming and prove the analytical solution of the Q-function by means of backwards induction. Writing down this dynamic programming procedure is cumbersome but recompensates by means of the exact structure of our new Q function.
For $l=1,\dots,d$, the $l$-th submatrix of a matrix $\m{\Pi}\in\mathbb{R}^{p\times nd}$ is defined as
\begin{equation}
\m{\Pi}\left[l\right]=\matc{\m{\Pi}(1,(l-1)n+1) & \cdots & \m{\Pi}(1,nl)\\
	\vdots & \ddots & \vdots\\
	\m{\Pi}(p,(l-1)n+1) & \cdots & \m{\Pi}(p,nl)}.
\end{equation}

Furthermore, note that $\m{I}_n$ denotes the $n\times n$ identity matrix and $\zeta\in\mathbb{N}_0$ is a placeholder. Later, $\zeta$ will be replaced by $\eta$ (respectively $\eta+1$ in the inductive step), where $\eta=K-\kappa$ denotes the remaining time steps on the horizon $K$. Further, $p$ is an index with $p\in\mathbb{N}:$ $p>1$. We define the following shorthand notations.

\begingroup
\setlength{\arraycolsep}{4pt}
\begin{align}
&\begin{aligned}
\m{X}_{\zeta}^{0} =& \m{X}^{0}=\mat{ \m{I}_n & -\m{I}_n },\\
\m{X}_{\zeta}^{1} =& \sqrt{\gamma}\left(-\m{X}^{0}[1]\m{B}\m{G}_\zeta\mat{\m{F}_\zeta & \m{L}_\zeta^{\zeta-1} & \cdots & \m{L}_\zeta^{0}} \right.\\&+\left. \mat{\m{X}^{0}[1]\m{A} & \m{X}^{0}[2] & \m{0} & \cdots & \m{0}}\right)\\
\intertext{and}
\m{X}_{\zeta}^{p} =&		 \sqrt{\gamma}\left(-\m{X}_{\zeta}^{p-1}[1]\m{B}\m{G}_\zeta\mat{\m{F}_\zeta & \m{L}_\zeta^{\zeta-1} & \cdots & \m{L}_\zeta^{0}}\right.\nonumber
\end{aligned}
\\
&\begin{aligned}
+\left. \mat{\m{X}_{\zeta}^{p-1}[1]\m{A} & \m{0} & \m{X}_{\zeta}^{p-1}[2] & \cdots & \m{X}_{\zeta}^{p-1}[\zeta-1]} \right)\!,
\end{aligned}
\end{align}	
\endgroup
as well as
\begingroup
\setlength{\arraycolsep}{4pt}
\begin{align}
&\begin{aligned}
\m{U}_{\zeta}^{1} =& -\m{G}_\zeta\mat{\m{F}_\zeta & \m{L}_\zeta^{\zeta-1} & \cdots & \m{L}_\zeta^{0} }\\
\intertext{and}
\m{U}_{\zeta}^{p} =& \sqrt{\gamma}\left(-\m{U}_{\zeta}^{p-1}[1]\m{B}\m{G}_\zeta\mat{\m{F}_\zeta & \m{L}_\zeta^{\zeta-1} & \cdots & \m{L}_\zeta^{0}}\right.\nonumber
\end{aligned}
\\
&\begin{aligned}
+\left.\mat{\m{U}_{\zeta}^{p-1}[1]\m{A} & \m{0} & \m{U}_{\zeta}^{p-1}[2] & \cdots & \m{U}_{\zeta}^{p-1}[\zeta-1]}\right)\!,
\end{aligned}
\end{align}	
\endgroup
with
\begingroup
\setlength{\arraycolsep}{4pt}
\begin{align}
\m{M}_\zeta &= \gamma\m{B}^\intercal\left(\vphantom{\sum\limits_{i=2}^{\zeta}\left(\m{X}_{K-\zeta}^{i}[1]\right)^\intercal \m{Q} \m{X}_{K-\zeta}^{i}[1]}\right.\sum\limits_{i=0}^{\zeta-2}\left(\m{X}_{\zeta}^{i}[1]\right)^\intercal \m{Q} \m{X}_{\zeta}^{i}[1]\nonumber& \\   &\hphantom{=\gamma\m{B}^\intercal\left(\vphantom{\sum\limits_{i=2}^{\zeta-1}\m{X}[1]_{K-\zeta}^{i^\intercal} \m{Q} \m{X}[1]_{K-\zeta}^{i}}\right.}+\left.\sum\limits_{i=1}^{\zeta-2}\left(\m{U}_{\zeta}^{i}[1]\right)^\intercal \m{R} \m{U}_{\zeta}^{i}[1]\right), &\\
\m{F}_\zeta &= \m{M}_\zeta\m{A},&\\
\m{G}_\zeta^{-1} &= \m{M}_\zeta\m{B}+\m{R}, &\\
\m{L}_\zeta^j &= \begin{cases}\gamma\m{B}^\intercal\!\big(\m{X}^{0}[1]\big)^\intercal \m{Q} \m{X}^{0}[2], \text{ for }j=\zeta-1,&\vspace{0.2cm}\\
\gamma\m{B}^\intercal\!\left(\sum\limits_{i=1}^{\zeta-2}\left(\m{X}_{\zeta}^{i}[1]\right)^\intercal \m{Q} \m{X}_{\zeta}^{i}\left[\zeta-j\right] \right. \\
\hphantom{\gamma\m{B}^\intercal\!\left(\vphantom{\sum\limits_{i=2}^{\zeta-1}\m{U}[1]_{K-\zeta}^{i^\intercal} \m{R}}\right.}  +\left. \sum\limits_{i=1}^{\zeta-2}\left(\m{U}_{\zeta}^{i}[1]\right)^\intercal \m{R} \m{U}_{\zeta}^{i}\left[\zeta-j\right]\right)\!,&\\ \hphantom{\gamma\m{B}^\intercal\m{X}[1]_{K-\zeta}^{j+1^\intercal} \m{Q} \m{X}[2]_{K-\zeta}^{j+1}, \,\quad}\text{for }j<\zeta-1,&\end{cases}
\end{align}
\endgroup
with $j\in\mathbb{N}_0$.

Let furthermore
\begingroup
\setlength{\arraycolsep}{3.3pt}
\begin{align}
\m{\rho}_0^\kappa &= \mat{\m{x}_{k_\kappa}^\intercal & \m{r}_{k_\kappa}^\intercal}\left(\m{X}^{0}\right)^\intercal, \label{eq_rho0}\\
\m{\rho}_1^\kappa &= \mat{\m{x}_{k_\kappa}^\intercal & \m{u}_{k_\kappa}^\intercal & \m{r}_{k_\kappa+1}^\intercal} \matc{\left(\m{X}^{0}[1]\m{A}\right)^\intercal  \\ \left(\m{X}^{0}[1]\m{B}\right)^\intercal \\\left(\m{X}^{0}[2]\right)^\intercal}\!, \\
\m{\mu}_i^\kappa &= \mat{\m{x}_{k_\kappa}^\intercal & \m{u}_{k_\kappa}^\intercal & \m{r}_{k_\kappa+2}^\intercal & \cdots & \m{r}_{k+K}^\intercal}\matc{\left(\m{U}_{\eta}^{\eta-i}[1]\m{A}\right)^\intercal  \\ \left(\m{U}_{\eta}^{\eta-i}[1]\m{B}\right)^\intercal \\\left(\m{U}_{\eta}^{\eta-i}[2]\right)^\intercal \\ \vdots \\ \left(\m{U}_{\eta}^{\eta-i}[\eta]\right)^\intercal}\!, \\
\m{\chi}_i^\kappa &= \mat{\m{x}_{k_\kappa}^\intercal & \m{u}_{k_\kappa}^\intercal & \m{r}_{k_\kappa+2}^\intercal & \cdots & \m{r}_{k+K}^\intercal}\matc{\left(\m{X}_{\eta}^{\eta-i}[1]\m{A}\right)^\intercal  \\ \left(\m{X}_{\eta}^{\eta-i}[1]\m{B}\right)^\intercal \\\left(\m{X}_{\eta}^{\eta-i}[2]\right)^\intercal \\ \vdots \\ \left(\m{X}_{\eta}^{\eta-i}[\eta]\right)^\intercal}\label{eq_xi}\!,
\end{align}
\endgroup
where ${k_\kappa}=k+K-\eta=k+\kappa$ and $i\in\mathbb{N}$.

\begin{proof}
	In the first step, we use backwards induction to prove that the Q-function $\tensor*[^K]{\!Q}{_{\kappa}}$ (cf. Definition~\ref{def:Q}) for system \eqref{eq:System} with the objective function \eqref{eq:Kostena} is given by
	\begin{align}
	\tensor*[^K]{\!Q}{_{\kappa}} &= \frac{1}{2}\!\left(\vphantom{\sum_{i=1}^{N-2}\left(\m{\chi}_i\m{Q}\m{\chi}_i^\intercal+\m{\mu}_i\m{R}\m{\mu}_i^\intercal\right)}
	\m{\rho}_0^\kappa\m{Q}\left(\m{\rho}_0^{\kappa}\right)^\intercal+\m{u}_{k+\kappa}^\intercal \m{R}\m{u}_{k+\kappa}+\gamma\m{\rho}_1^\kappa\m{Q}\left(\m{\rho}_1^{\kappa}\right)^\intercal\right.\nonumber\\ &\hphantom{=\frac{1}{2}} \left.+ \gamma\sum_{i=1}^{K-\kappa-2}\left(\m{\chi}_i^\kappa\m{Q}\left(\m{\chi}_i^{\kappa}\right)^\intercal+\m{\mu}_i^\kappa\m{R}\left(\m{\mu}_i^{\kappa}\right)^\intercal\right)\right).\label{eq:Q_kappa}
	\end{align}
	Starting from $\tensor*[^K]{\!Q}{_{K}}$ (cf. \eqref{eq:QNN}), $\m{u}_{k+K}^*=\m{0}$ directly follows from Definition~\ref{def:Q} and
	\begin{align}
	\left.\frac{\partial \tensor*[^K]{\!Q}{_{K}}}{\partial \m{u}_{k+K}}\right|_{\m{u}_{k+K}^{*}}=\m{0}, \qquad \left.\frac{\partial^2 \tensor*[^K]{\!Q}{_{K}}}{\partial \m{u}_{k+K}^2}\right|_{\m{u}_{k+K}^{*}}=\m{R}\succ\m{0}.
	\end{align}
	Then, by iterating backwards in time, applying \eqref{Q_def1} and the system dynamics \eqref{eq:System}, with $\eta = K-\kappa$, \eqref{eq:Q_kappa} can be shown to hold for $\eta = 0, 1, 2$, i.e. $\kappa = K, K-1, K-2$. Furthermore,
	\begin{align}
	\m{u}_{k+\kappa}^*=-\m{G}_\eta\left(\m{F}_\eta\m{x}_{k+\kappa}+\sum_{j=0}^{\eta-1}\m{L}_\eta^j\m{r}_{k+K-j}\right)\label{eq:uopt_G}
	\end{align}
	minimizes \eqref{eq:Q_kappa} because
	\begin{align}\label{dQdu_d2Qdu2}
	\left.\frac{\partial \tensor*[^K]{\!Q}{_{\kappa}}}{\partial \m{u}_{k+\kappa}}\right|_{\m{u}_{k+\kappa}^{*}}=\m{0},\quad \text{where }
	\left.\frac{\partial^2 \tensor*[^K]{\!Q}{_{\kappa}}}{\partial \m{u}_{k+\kappa}^2}\right|_{\m{u}_{k+\kappa}^{*}}\succ\m{0}
	\end{align}
	is guaranteed as $\m{R}\succ\m{0}$ and $\m{Q}\succeq\m{0}$.
	The induction hypothesis $\tensor*[^K]{\!Q}{_{\kappa-1}}$ (see \eqref{eq:Q_kappa} with $\kappa \rightarrow \kappa-1$) is then proven in the inductive step. This is done by representing $\tensor*[^K]{\!Q}{_{\kappa-1}}$ by means of \eqref{Q_def1} and  utilizing $\m{u}_{k+\kappa}^{*}$ from \eqref{eq:uopt_G}.
	This yields
	\begin{align}\label{eq:kappa-1induction}
	\tensor*[^K]{\!Q}{_{\kappa-1}}&=\frac{1}{2}\mat{\m{x}_{k+\kappa-1}\\ \m{r}_{k+\kappa-1}}^\intercal\left(\m{X}^{0}\right)^\intercal\m{Q}\m{X}^{0}\mat{\m{x}_{k+\kappa-1}\\ \m{r}_{k+\kappa-1}}\nonumber \\
	&\hphantom{ = }+\frac{1}{2}\m{u}_{k+\kappa-1}^\intercal\m{R}\m{u}_{k+\kappa-1}\nonumber \\
	&\hphantom{ = }+\frac{1}{2}\mat{\m{x}_{k+\kappa}\\ \m{r}_{k+\kappa}}^\intercal\left(\m{X}^{0}\right)^\intercal\m{Q}\m{X}^{0}\mat{\m{x}_{k+\kappa}\\ \m{r}_{k+\kappa}}\nonumber \\
	&\hphantom{ = }+\frac{1}{2}\gamma\bar{\m{z}}_k^\intercal \sum_{i=1}^{\eta-1}\left(\left(\m{X}_{\eta+1}^{i}\right)^\intercal\m{Q}\m{X}_{\eta+1}^{i}\right.\nonumber \\ &\left.\hphantom{ = +\frac{1}{2}\gamma\bar{\m{z}}_k^\intercal \sum_{i=1}^{\eta-1}}+\left(\m{U}_{\eta+1}^{i}\right)^\intercal\m{R}\m{U}_{\eta+1}^{i}\right) \bar{\m{z}}_k,
	\end{align}
	where $\bar{\m{z}}_k^\intercal=\mat{\m{x}_{k+\kappa}^\intercal&\m{r}_{k+\kappa+1}^\intercal&\dots&\m{r}_{k+K}^\intercal}$.
	
	Then, replace $\m{x}_{k+\kappa}=\m{A}\m{x}_{k+\kappa-1}+\m{B}\m{u}_{k+\kappa-1}$ (cf. \eqref{eq:System}) in \eqref{eq:kappa-1induction} which results in
	\begin{align}
	\tensor*[^K]{\!Q}{_{{\kappa-1}}} &= \frac{1}{2}\!\left(\vphantom{\sum_{i=1}^{N-2}\left(\m{\chi}_i\m{Q}\m{\chi}_i^\intercal+\m{\mu}_i\m{R}\m{\mu}_i^\intercal\right)}
	\m{\rho}_0^{\kappa-1}\m{Q}\left(\m{\rho}_0^{{\kappa-1}}\right)^\intercal+\m{u}_{k+{\kappa-1}}^\intercal \m{R}\m{u}_{k+{\kappa-1}}\right.\nonumber \\&\hphantom{=\frac{1}{2}X}+\gamma\m{\rho}_1^{\kappa-1}\m{Q}\left(\m{\rho}_1^{{\kappa-1}}\right)^\intercal\nonumber\\ &\hphantom{=\frac{1}{2}X} + \gamma\sum_{i=1}^{K-(\kappa-1)-2}\left(\m{\chi}_i^{\kappa-1}\m{Q}\left(\m{\chi}_i^{{\kappa-1}}\right)^\intercal\nonumber\right. \\ &\hphantom{=\frac{1}{2}X}\left.\left.+\m{\mu}_i^{\kappa-1}\m{R}\left(\m{\mu}_i^{{\kappa-1}}\right)^\intercal\right)\vphantom{\sum_{i=1}^{N-2}\left(\m{\chi}_i\m{Q}\m{\chi}_i^\intercal+\m{\mu}_i\m{R}\m{\mu}_i^\intercal\right)}\right)
	\end{align}
	and yields the induction hypothesis (\eqref{eq:Q_kappa} with $\kappa\rightarrow\kappa-1$) and thus proves \eqref{eq:Q_kappa}.
	
	Thus, the analytical solution of {$\tensor*[^K]{\!Q}{_{0}}$} is quadratic w.r.t. $\m{\rho}_0^\kappa$, $\m{u}_{k}$, $\m{\rho}_1^\kappa$, $\m{\chi}_i^\kappa$ and $\m{\mu}_i^\kappa$. As each of these components is linear w.r.t. $\m{x}_k$, $\m{u}_k$ and $\m{r}_{k+1},\dots,\m{r}_{k+N}$ according to \eqref{eq_rho0}--\eqref{eq_xi}, Theorem~\ref{Theo:Q} follows directly for $\kappa=0$ and $K\geq N$.
\end{proof}

\section{Proof of Lemma~\ref{lemma:VI_matrix_iteration}}\label{append:VI_matrix_iteration}
\begin{proof}
	Let $v(\cdot)$ be a function that transforms a symmetrical squared matrix to a vector such that $v\!\left(\m{H}^{(i)}\right)=\m{w}^{(i)}$. With
	\begin{align}
	&c_k+\gamma{\m{w}^{(i)}}^\intercal\m{\phi}\left(\m{z}_{k+1}^{*(i)}\right)\nonumber\\&=\frac{1}{2}\m{z}_k^\intercal\underbrace{\left(\m{G}+\gamma\m{M}\!\left(\m{L}^{(i)}\right)^\intercal \m{H}^{(i)}\m{M}\!\left(\m{L}^{(i)}\right)\right)}_{=\m{H}^{(i+1)}}\m{z}_k,
	\end{align}
	\eqref{eq:Phimatrix}, \eqref{eq:cmatrix} and the definition of $\m{\phi}$ according to Lemma~\ref{lemma:Qexakt}
	follows that \eqref{eq:LS} can be written as
	\begin{align}
	\m{w}^{(i+1)}&=\left({\m{\Phi}^\intercal}\m{\Phi}\right)^{-1}{\m{\Phi}^\intercal}\matc{\frac{1}{2}\m{z}_{k-M+1}^\intercal\m{H}^{(i+1)}\m{z}_{k-M+1}\\\vdots\\\frac{1}{2}\m{z}_k^\intercal\m{H}^{(i+1)}\m{z}_k}\nonumber \\ &=\underbrace{\left({\m{\Phi}^\intercal}\m{\Phi}\right)^{-1}\left({\m{\Phi}^\intercal}\m{\Phi}\right)}_{=\m{I}}v\left(\m{H}^{(i+1)}\right).\label{eq:lemma_iteration_I}
	\end{align}
	Thus, as $\m{H}^{(i+1)}$ is symmetrically constructed from $\m{w}^{(i+1)}$, it follows from \eqref{eq:lemma_iteration_I} that iterating on $\m{w}^{(i)}$ by means of the value iteration is equivalent to iterating on \eqref{eq:iteration_Hi1}.
\end{proof}

\section{Proof of Lemma~\ref{lemma:monotonically_increasing}}\label{append:monotonically_increasing}
\begin{proof}
	According to Lemma~\ref{lemma:hilfslemma}, the control law $\m{L}\left(\m{Z}^{(i)}\right)$ minimizes $\m{z}_{k+1}^\intercal\m{Z}^{(i)}\m{z}_{k+1}$, $\forall i>0$. Thus,
	\begin{align}
	\m{z}_k^\intercal\m{M}\left(\m{L}\left(\m{W}^{(i)}\right)\right)^\intercal\m{Z}^{(i)}\m{M}\left(\m{L}\left(\m{W}^{(i)}\right)\right)\m{z}_k\nonumber \\
	\geq \m{z}_k^\intercal\m{M}\left(\m{L}\left(\m{Z}^{(i)}\right)\right)^\intercal\m{Z}^{(i)}\m{M}\left(\m{L}\left(\m{Z}^{(i)}\right)\right)\m{z}_k
	\end{align}
	follows. This yields
	\begin{align}
	\m{M}\left(\m{L}\left(\m{W}^{(i)}\right)\right)^\intercal\m{Z}^{(i)}\m{M}\left(\m{L}\left(\m{W}^{(i)}\right)\right)\nonumber \\-\m{M}\left(\m{L}\left(\m{Z}^{(i)}\right)\right)^\intercal\m{Z}^{(i)}\m{M}\left(\m{L}\left(\m{Z}^{(i)}\right)\right)\succeq \m{0}
	\end{align}
	and hence
	\begin{align}\label{eq:Z_Zdach}
	\m{Z}^{(i+1)}=F\left(\m{Z}^{(i)},\m{W}^{(i)}\right)\succeq F\left(\m{Z}^{(i)},\m{L}\left(\m{Z}^{(i)}\right)\right) \eqqcolon \hat{\m{Z}}^{(i+1)}.
	\end{align}
	This also implies
	\begin{align}\label{eq:this_implies}
	F\left(\m{H}^{(i)},\m{L}\left(\m{Z}^{(i)}\right)\right)\succeq F\left(\m{H}^{(i)},\m{L}\left(\m{H}^{(i)}\right)\right) = \m{H}^{(i+1)}.
	\end{align}
	With $\m{0}\preceq\m{H}^{(0)}\preceq\m{Z}^{(0)}$ and the induction hypothesis $\m{H}^{(i)}\preceq\m{Z}^{(i)}$,
	\begin{align}
	&F\left(\m{H}^{(i)},\m{L}\left(\m{Z}^{(i)}\right)\right)\nonumber \\
	&=\m{G}+\gamma\m{M}\left(\m{L}\left(\m{Z}^{(i)}\right)\right)^\intercal\m{H}^{(i)}\m{M}\left(\m{L}\left(\m{Z}^{(i)}\right)\right)\nonumber \\
	&\preceq\m{G}+\gamma\m{M}\left(\m{L}\left(\m{Z}^{(i)}\right)\right)^\intercal\m{Z}^{(i)}\m{M}\left(\m{L}\left(\m{Z}^{(i)}\right)\right)\nonumber \\
	&=F\left(\m{Z}^{(i)},\m{L}\left(\m{Z}^{(i)}\right)\right)=\hat{\m{Z}}^{(i+1)}
	\end{align}
	follows. With \eqref{eq:this_implies}, this yields $\m{H}^{(i+1)}\preceq\hat{\m{Z}}^{(i+1)}$ and incorporating \eqref{eq:Z_Zdach} completes the proof.
\end{proof}

\section{Proof of Lemma~\ref{lemma:upper_bounded}}\label{append:upper_bounded}
\begin{proof}
	Let $\m{Z}^{(0)}=\m{H}^{(0)}$, $\m{Z}^{(i+1)}=F\left(\m{Z}^{(i)},\tilde{\m{L}}\right)$, where $\tilde{\m{L}}$ is chosen such that all eigenvalues of $\left(\m{A}+\m{B}\tilde{\m{L}}_x\right)$ are inside the unit circle. Note that existence of $\tilde{\m{L}}$ is guaranteed due to $\left(\m{A}, \m{B}\right)$ controllable. With $\m{W}^{(i)}=\tilde{\m{L}}$ in Lemma~\ref{lemma:monotonically_increasing}, $\m{0}\preceq\m{H}^{(i)}\preceq\m{Z}^{(i)}$ holds.  With
	\begin{align}
	\m{Z}^{(i+1)}-\m{Z}^{(i)}&=F\left(\m{Z}^{(i)},\tilde{\m{L}}\right)-F\left(\m{Z}^{(i-1)},\tilde{\m{L}}\right)\nonumber \\
	&=\gamma\m{M}\left(\tilde{\m{L}}\right)^\intercal\left(\m{Z}^{(i)}-\m{Z}^{(i-1)}\right)\m{M}\left(\tilde{\m{L}}\right),
	\end{align}
	$\text{vec}({\m{\cdot}})$ stacking the columns of a matrix and $\otimes$ being the Kronecker product,
	\begin{align}
	&\text{vec}\left(\m{Z}^{(i+1)}-\m{Z}{(i)}\right)\nonumber \\&=\underbrace{\gamma\m{M}\left(\tilde{\m{L}}\right)^\intercal\otimes \m{M}\left(\tilde{\m{L}}\right)}_{=\m{E}}\text{vec}\left(\m{Z}^{(i)}-\m{Z}^{(i-1)}\right)
	\end{align}
	follows, thus
	\begin{align}\label{eq:Ehochi1}
	\text{vec}\left(\m{Z}^{(i)}-\m{Z}^{(i-1)}\right)&=\m{E}^{i-1}\text{vec}\left(\m{Z}^{(1)}-\m{Z}^{(0)}\right).
	\end{align}
	If all eigenvalues of $\sqrt{\gamma}\m{M}\left(\tilde{\m{L}}\right)$ are inside the unit circle, this also holds for the eigenvalues of $\m{E}$. Due to its specific structure (cf. \eqref{eq:M}), $(N+1)n$ eigenvalues of $\m{M}\left(\tilde{\m{L}}\right)$ are at the origin. Therefore, consider the remaining eigenvalues, i.e. the eigenvalues of $\m{D}=\matc{\m{A}&\m{B}\\ \tilde{\m{L}}_x\m{A}&\tilde{\m{L}}_x\m{B}}$ (cf. \cite[Lemma~B.1.2]{Landelius.1997}). Let $\norm{\m{\cdot}}$ be the spectral norm of a matrix, respectively the euclidean norm of a vector. Then,
	\begin{align}
	\lim_{i\rightarrow \infty}\norm{\m{D}^i}&=\lim_{i\rightarrow \infty}\norm{\matc{\m{I}_n\\\tilde{\m{L}}_x}\left(\m{A}+\m{B}\tilde{\m{L}}_x\right)^{i-1}\matc{\m{A}&\m{B}}}\nonumber \\
	&\leq \lim_{i\rightarrow \infty}\norm{\matc{\m{I}_n\\\tilde{\m{L}}_x}}\norm{\matc{\m{A}&\m{B}}}\norm{\left(\m{A}+\m{B}\tilde{\m{L}}_x\right)^{i-1}}\nonumber \\ &=0.
	\end{align}
	As $\lim_{i\rightarrow \infty}\m{D}^i=\m{0}$, all eigenvalues of $\m{D}$ are inside the unit circle. Hence, all eigenvalues of $\m{E}$ are also inside the unit circle and $e=\norm{\m{E}}<1$. With
	\begin{align}
	\text{vec}\left(\m{Z}^{(j)}\right)&=\text{vec}\left(\m{Z}^{(0)}\right)+\sum_{i=1}^{j}\text{vec}\left(\m{Z}^{(i)}-\m{Z}^{(i-1)}\right)\nonumber \\ &\overset{\eqref{eq:Ehochi1}}=\text{vec}\left(\m{Z}^{(0)}\right)+\sum_{i=1}^{j}\m{E}^{i-1}\text{vec}\left(\m{Z}^{(1)}-\m{Z}^{(0)}\right)
	\end{align}
	follows
	\begin{align}
	&\norm{\text{vec}\left(\m{Z}^{(j)}\right)}\nonumber \\&\leq\norm{\text{vec}\left(\m{Z}^{(0)}\right)}+\sum_{i=0}^{\infty}\norm{\m{E}}^i\norm{\text{vec}\left(\m{Z}^{(1)}-\m{Z}^{(0)}\right)}=e_0,
	\end{align}
	where the upper bound $e_0$ is independent of $j$.
	As $\norm{\text{vec}\left(\m{Z}^{(j)}\right)}$ is upper bounded by $e_0$, there exists $e_1$ such that $\norm{\m{Z}^{(j)}}\leq e_1$, $\forall j$. With $\m{Y}=e_1\m{I}_{\text{dim}(\m{H})}$, $\m{0}\preceq\m{H}^{(i)}\preceq\m{Z}^{(i)}\preceq\norm{\m{Z}^{(i)}}\m{I}_{\text{dim}(\m{H})}\preceq e_1\m{I}_{\text{dim}(\m{H})}=\m{Y}$ results which completes the proof. 
\end{proof}


 \bibliographystyle{elsarticle-num} 
 \bibliography{mybib} 


%
%
%
\end{document}